\newtheorem{definition}{Definition}[section]
\newtheorem{example}{Example}[section]
\newtheorem{proposition}{Proposition}[section]
\tikzstyle{startstop} = [rectangle, rounded corners, minimum width=1cm, minimum height=0.5cm,text centered, draw=black]
\tikzstyle{io} = [trapezium, trapezium left angle=70, trapezium right angle=110, minimum width=2cm, minimum height=0.5cm, text centered, draw=black]
\tikzstyle{process} = [rectangle, minimum width=2cm, minimum height=0.5cm, text centered,draw=black]
\tikzstyle{process_n} = [rectangle, minimum width=1.5cm, minimum height=0.5cm, text centered,text width=1.5cm, draw=black]
\tikzstyle{decision} = [diamond, minimum width=1.2cm, minimum height=0.5cm, text centered, aspect=1.5, draw=black]
\tikzstyle{arrow} = [->,>=stealth]
\begin{document}

%%
%% The "title" command has an optional parameter,
%% allowing the author to define a "short title" to be used in page headers.
\title{Causality-Based Neural Network Repair}

%%
%% The "author" command and its associated commands are used to define
%% the authors and their affiliations.
%% Of note is the shared affiliation of the first two authors, and the
%% "authornote" and "authornotemark" commands
%% used to denote shared contribution to the research.
%\author{Bing Sun\inst{1}, Jun Sun\inst{1}, Hong Long Pham\inst{1}, and Jie Shi\inst{2}}
%\institute{$^1$Singapore Management University \\
%$^2$Huawei Singapore\\
%}

\author{Bing Sun}
\affiliation{%
  \institution{Singapore Management University}
}

\author{Jun Sun}
\affiliation{%
  \institution{Singapore Management University}
}

\author{Long H. Pham}
\affiliation{%
  \institution{Singapore Management University}
}

\author{Jie Shi}
\affiliation{%
  \institution{Huawei Singapore}
}

%%
%% By default, the full list of authors will be used in the page
%% headers. Often, this list is too long, and will overlap
%% other information printed in the page headers. This command allows
%% the author to define a more concise list
%% of authors' names for this purpose.
%\renewcommand{\shortauthors}{Trovato and Tobin,~\emph{et al.}}

%%
%% The abstract is a short summary of the work to be presented in the
%% article.
\begin{abstract}
Neural networks have had discernible achievements in a wide range of applications. The wide-spread adoption also raises the concern of their dependability and reliability. Similar to traditional decision-making programs, neural networks can have defects that need to be repaired. The defects may cause unsafe behaviors, raise security concerns or unjust societal impacts. In this work, we address the problem of repairing a neural network for desirable properties such as fairness and the absence of backdoor. The goal is to construct a neural network that satisfies the property by (minimally) adjusting the given neural network's parameters (i.e., weights). Specifically, we propose CARE (\textbf{CA}usality-based \textbf{RE}pair), a causality-based neural network repair technique that 1) performs causality-based fault localization to identify the `guilty' neurons and 2) optimizes the parameters of the identified neurons to reduce the misbehavior. We have empirically evaluated CARE on various tasks such as backdoor removal, neural network repair for fairness and safety properties. Our experiment results show that CARE is able to repair all neural networks efficiently and effectively. For fairness repair tasks, CARE successfully improves fairness by $61.91\%$ on average. For backdoor removal tasks, CARE reduces the attack success rate from over $98\%$ to less than $1\%$. For safety property repair tasks, CARE reduces the property violation rate to less than $1\%$. Results also show that thanks to the causality-based fault localization, CARE's repair focuses on the misbehavior and preserves the accuracy of the neural networks. %Overall results suggest the feasibility of CARE of repairing neural networks for fairness.
\end{abstract}

\maketitle

\section{Introduction} \label{sec:intro}
Neural networks have had discernible achievements in a wide range of applications, ranging from medical diagnosis~\cite{medical_diagnosis}, facial recognition~\cite{face_recognition}, fraud detection~\cite{fraud_detection} and self-driving~\cite{selfdriving}. While neural networks are demonstrating excellent performance, there has been a growing concern on whether they are reliable and dependable. Similar to traditional decision-making programs, neural networks inevitably have defects and need to be repaired at times. Neural networks are usually inherently black-boxes and do not provide explanations on how and why decisions are made in certain ways. As a result, these defects are more ``hidden" compared to traditional decision-making programs~\cite{demographic,verification,minimal20}. It is thus crucial to develop systematic ways to make sure defects in a neural network are repaired and desirable properties are satisfied.

%Although fairness has always been a ripe topic for philosophical debate, there are no universally agreed definitions~\cite{fairsquare}. A number of formal definitions of fairness have been proposed within the context of automated decision-making programs. To name only a few, individual fairness dictates that inputs that differ only by certain sensitive feature (such as gender or race) should be labeled with the same prediction, whereas group fairness demands that a minority class of inputs should have similar probability of being predicted with certain (favorable) label to that of the majority class. In this work, we focus on group fairness for its relevance in a wide variety of neural network applications. We refer the readers to~\cite{science} for detailed definitions of fairness.

Similar to the activity known as debugging for traditional software programs, there is often a need to modify a neural network to fix certain aspects of its behavior (whilst maintaining other functionalities). Existing efforts to repair the unexpected behavior of neural networks often focus on retraining with additional data~\cite{deep_mutation,mode}. Although retraining is natural and often effective, retraining a neural network model could be difficult and expensive for real-world applications~\cite{minimal20}. More importantly, unlike debugging traditional software program (where we can be reasonably certain that the bug is eliminated after the `fix'), there is no guarantee that the retrained model eliminates the unwanted behavior. Therefore, techniques for modifying an existing model without retraining will be preferable in certain scenarios. That is, we sometimes would like to apply a small modification on an existing neural network to remove unexpected behaviors whilst retaining most of its well-trained behaviors.

Neural networks may misbehave in different ways. Given a model trained primarily for accuracy, it could be discriminative, i.e., the prediction is more favourable to certain groups thus violating fairness property. In this situation, small adjustment can be applied to the trained network without retraining, i.e., to improve its fairness whilst maintaining the model's accuracy. In another scenario, malicious hidden functionalities (backdoor) could be easily embedded into a neural network if the training data is poisoned~\cite{trojanNN,NC}. Such a backdoor produces unexpected behavior if a specific trigger is added to an input, whilst presenting normal behavior on clean inputs. In this scenario, one would like to repair the neural network by removing the backdoor while maintaining the correct behavior on clean inputs. A further scenario is that a trained model could violate safety-critical properties on certain inputs, e.g., the output is not within the expected range~\cite{acas,Reluplex}. In this case, we would like to repair the network by making adjustments to its parameters so that the repaired model satisfies the specified property.

For both traditional software programs and neural networks, debugging and repairing are essentially reasoning over causality. For traditional software programs, the notion of causality is natural and well defined~\cite{causal_test,wy_causal,causal_canvas}, e.g., based on (control and data) dependency. However, the same is not true for neural networks, i.e., a wrong decision could easily be the collective result of all neurons in the network and yet considering that all of them are `responsible' and should be modified is unlikely to be helpful. Although causality analysis has been applied to interpret machine learning models~\cite{cal_att} and verification~\cite{causal_learning}, existing causality analysis typically focuses on the causal relation from the input to the model prediction, and not the hidden neurons. Hence it is still not clear how to apply existing definitions of causality for repairing neural networks. In other words, how to repair a neural network to remedy defects based on causality is still largely an open problem.

In this work, we introduce CARE (\textbf{CA}usality-based \textbf{RE}pair), a general automatic repair algorithm for neural networks. Instead of retraining, CARE applies minor modifications to a given model's parameters (i.e., weights) to repair known defects whilst maintaining the model's accuracy. The defects are defined based on desirable properties that the given neural network fails to satisfy. CARE is a search-based automated program repair technique that first performs fault localization based on causal attribution of each neuron, i.e., it locates those neurons that have the most contribution to the model's defects. Secondly, CARE performs automatic repair by adjusting model parameters related to the identified neurons until the resultant model satisfies the specified property (whilst maintaining accuracy). 

%Secondly, CARE performs automatic repair in three steps, i.e., it 1) samples a set of inputs based on the statistical distribution of the given dataset, 2) generates repair candidates by adjusting model parameters related to the identified neurons, and 3) verifies that the resultant model satisfy  (and maintains certain level of accuracy). 
We summarize our contributinos as follows.
\begin{itemize}
    \item We propose and implement CARE as a general automatic repair algorithm, that applies causality-based fault localization and PSO optimization to all layers of neural networks.
    \item We demonstrate the effectiveness of CARE in the context of three different tasks: 1) fairness improvement, 2) backdoor removal and 3) safety property violation repair.
    \item We empirically evaluate CARE on multiple neural networks including feed-forward neural networks (FFNNs) and convolutional neural networks (CNNs), trained on multiple benchmark datasets. The results indicate that CARE improves the models' performance over the specified properties significantly and CARE outperforms existing approaches proposed for neural network repair. 
\end{itemize}

%We have empirically evaluated CARE on multiple neural network models, including feed-forward neural networks (FFNNs) and convolutional neural networks (CNNs), trained on multiple benchmark datasets. % that are subjects of multiple previous studies on neural network verification, fairness testing and poisoning attack~\cite{adf,provable,Reluplex,nnrepair}. 
%The experiment results show that CARE successfully repairs all experimented models by adjusting their model parameters (i.e., weights) based on the fault localization results, and improves the models' performance over the specified properties significantly. The experiment results also show that CARE's fault localization is effective when it is compared with the performance of alternative approaches, i.e., repairing by optimizing randomly selected parameters, optimizing all parameters (i.e., without any localization) and optimizing neurons selected based on gradients. Lastly, our result shows that CARE outperforms existing approaches proposed for neural network repair. 

%To the best of our knowledge, CARE is the first approach on reparing neural networks' against fairness properties. 
The remainder of this paper is organized as follows. In Section~\ref{sec:preli}, we review relevant background and define our problem. In Section~\ref{sec:approach}, we present each step of our approach in detail. In Section~\ref{sec:imple}, we evaluate our approach through multiple experiments to answer multiple research questions. We review related work in Section~\ref{sec:related} and conclude in Section~\ref{sec:conclusion}. 

\section{Preliminary} \label{sec:preli}
In this section, we review relevant background and define our research problem.

\subsection{Neural Network Properties}
There are many desirable properties we would like a trained neural network to satisfy besides meeting its accuracy requirement. In this work, we assume $\phi$ to be one of the following three kinds of properties and show that CARE can handle all of them. \\

\noindent \emph{Fairness}: Fairness is a desirable property that potentially has significant societal impact~\cite{trust_ai}. Due to the fact that machine learning models are data-driven and the training data could be biased (or imbalanced), models trained based on such data could be discriminative~\cite{fairtest,fairsquare}. In this work, we define independence-based fairness following~\cite{FM21} as follows.

\begin{definition}[Independence-based Fairness]
\label{def:fairness}
Let $N$ be a neural network and $\xi$ be a positive real-value constant. We write $Y$ as the prediction of $N$ on a set of input features $X$ and $L$ as the prediction set. We further write $F \subseteq X$ as a feature encoding some protected characteristics such as gender, age and race. $N$ satisfies independence-based fairness, with respect to $\xi$, if and only if, $\forall l \in L \: \forall f_i, f_j \in F~such~that~i \neq j$,
\begin{equation}
\begin{aligned}
| \: P(Y=l \: | \: F=f_i)-P(Y=l \: | \: F=f_j) \: | \: \leq \xi
\end{aligned}
\end{equation}

\end{definition}

Intuitively, Definition~\ref{def:fairness} states that, $N$ is fair as long as the probability difference of a favourable prediction for instances with different values of protected feature is within the threshold $\xi$.\\

\noindent \emph{Absence of backdoor}: With the wide adoption of neural networks in critical decision-making systems, sharing and adopting trained models become very popular. On the other hand, this gives attackers new opportunities. Backdoor attack is one of the neural network attacks that often cause significant threats to the system. Backdoor is a hidden pattern trained into a neural network, which produces unexpected behavior if and only if a specific trigger is added to an input~\cite{NC}. In classification tasks, the backdoor misclassifies an arbitrary inputs to the same target label, when the associated trigger is applied to the input. Hence another desirable property of a neural network would be 'backdoor free', where the backdoor attack success rate is kept below a certain threshold. 

%\todo{Define the property} 
\begin{definition}[Backdoor Attack Success Rate]
\label{def:sr}
Let $N$ be a backdoored neural network, $t$ be the target label, $X$ be a set of adversarial inputs with the trigger and $N(x)$ be the prediction on $x$. We say attack success rate (SR) is:
\begin{equation}
\begin{aligned}
SR(t) = P(N(x) = t | x\in X)
\end{aligned}
\end{equation}
\end{definition}
Intuitively, attack success rate is the percentage of adversarial inputs classified into the target label. Next, we define backdoor-free neural network.

\begin{definition}[Backdoor-free Neural Network]
\label{def:backdoor}
Let $N$ be a backdoored neural network and $\xi$ be a positive real-value constant. $N$ satisfies backdoor-free property with respect to $\xi$ and $t$, if $SR(t) < \xi$.
\end{definition}

\begin{comment}
\begin{example}
We train a convolutional neural network $N_5$ on MNIST~\cite{mnist} to recognize 10 hand-written digits in gray-scale images. We apply backdoor attack proposed in~\cite{badnet} and choose '3' as the target label, so that the SR is $98.68\%$. The trigger is a $4\times4$ square located at the bottom right of the original image. In this example, we set $\xi$ to be $1\%$, i.e., the backdoor is effectively mitigated if the SR of the repaired network is $< 1\%$.
\end{example}
\end{comment}

\noindent \emph{Safety}: For neural networks applied on safety-critical systems such as the Airborne Collision Avoidance~\cite{acas} system, all safety properties must be satisfied strictly. However, due to reasons like limited training data and insufficient training, those critical properties could be violated. Hence, another desirable behavior of a neural network is to satisfy all safety properties or the violation rate is kept below a certain threshold. 

%\todo{Define the property}
\begin{definition}[Safety Property Violation Rate]
\label{def:vr}
Let $N$ be a neural network, $X$ be a set of inputs and $N(x)$ be the prediction on $x$. Let $\rho$ be the critical safety property that $N$ is expected to satisfy. We say the property violation rate (VR) is:
\begin{equation}
\begin{aligned}
VR(\rho) = P(N(x) \not \vDash \rho~|~x\in X)
\end{aligned}
\end{equation}
\end{definition}
Intuitively, violation success rate is the percentage of inputs that violate the property $\rho$. Next, we define safety property violation-free neural network.

\begin{definition}[Safety Property Violation-free Neural Network]
\label{def:safety}
Let $N$ be a neural network and $\xi$ be a positive real-value constant. $N$ is safety property violation-free with respect to $\xi$ and $\rho$, if $~VR(\rho) < \xi$.
\end{definition}

\begin{comment}
\begin{example}
As discovered by~\cite{Reluplex}, $N_{2,9}$ of ACAX Xu~\cite{acas} networks violates safety property $\phi_8$. We set $\xi$ to $1\%$ so that we aim to repair $N_{2,9}$ so that $VR(\phi_8)$ of the repaired network is kept below $1\%$. 
\end{example}
\end{comment}

\begin{example}
We train a feed-forward neural network $N$ on Census Income dataset (refer to details of the dataset in Section~\ref{sec:dataset}) to predict whether an individual's income exceeds \$50K per year. The neural network is configured to have 5 hidden layers. We use this network as the running example in this paper.
%with 64, 32, 16, 8 and 4 hidden neurons respectively. 
In this example, we focus on a fairness property $\phi$ w.r.t. protected feature \emph{gender}, i.e., the model is considered unfair if the probability difference of a favourable prediction for females and males is greater than $1\%$. Note that this is for illustration purpose only and such a threshold is probably too strict for real-world applications. Given this neural network, the problem is to find a repaired network $M$ with minimal adjustment to $N$'s weight parameters so that $M$ satisfies the fairness property whilst maintaining its accuracy.
\end{example}

\subsection{Our Problem} 
We are now ready to define our problem. 
\begin{definition}[Neural Network Repair Problem]
\label{def:problem}

Let $N$ be a neural network, $S$ be the input space, and $\phi$ be one of the above-mentioned desirable properties. The neural network repair problem is to construct a neural network $M$ such that $M$ satisfies $\phi$ in the input space $S$ and the semantic distance between $N$ and $M$ is minimized. 
\end{definition}

We define semantic distance between two neural networks $N$ and $M$ as follows.

\begin{definition}[Semantic Distance]
\label{def:semdis}
Let $N(x)$ be the prediction of neural network $N$ on input $x$ where $x \in S$, the semantic distance between $N~and~M$ is defined as:

\begin{equation}
\begin{aligned}
P(N(x) \neq M(x))
\end{aligned}
\end{equation}

\end{definition}
Intuitively, the semantic distance between two neural networks is the probability that the predictions of the two models are different on the same input. In this work, we use the model accuracy difference of $N$ and $M$ as a measure of their semantic distance.

 \section{Our Approach} \label{sec:approach}
In this section, we present the details of our approach. An overview of our framework is as shown in Figure \ref{fig:framework} and Algorithm~\ref{alg:CARE} shows the details of our approach. The first step is to verify whether the given neural network $N$ satisfy property $\phi$. If $\phi$ is satisfied, CARE terminates immediately. Otherwise, CARE proceeds to the next step. It performs causality-based fault localization on all hidden neurons of $N$ and identifies those neurons that have the most contribution to $N$'s unwanted behavior. The third step is to optimize these neurons' weights and generate a repaired network $M$ such that $M$ satisfies the property $\phi$ and is semantically close to $N$. 
%With given neural network $N$, input space $S$ and the desirable property $\phi$ as the input, CARE performs property verification, fault localization and repair step by step and finally generates the repaired network $M$.

\begin{figure}[t]
%\begin{center}
%\centering
%\resizebox{0.35\textwidth}{!}
\begin{tikzpicture}[node distance=1.2cm]
\node (start) [startstop] {Start};
%\node (in1) [io, below of=start, yshift=0.3cm] {JSON Input};
%\node (pro1) [process, below of=in1, yshift=0.3cm] {Model Parser};

\node (pro3) [process, below of=start] {Verification};

\node (dec1) [decision, right of=pro3, xshift=1.5cm, yshift=-1.2cm] {$\phi$ satisfied?};
\node (pro4a) [process_n, right of=dec1, xshift=1.5cm] {Fault Localization};

\node (out1) [io, right of=start, xshift=1.5cm, yshift=-0.9cm] {Output Results};
\node (pro5) [process_n, right of=out1, xshift=1.5cm] {Repair};
\node (stop) [startstop, right of=start, xshift=1.5cm] {Stop};

\draw [arrow] (start) -- (pro3);
%\draw [arrow] (in1) -- (pro1);
%\draw [arrow] (pro1) -- (pro3);
\draw [arrow] (pro3) |- (dec1);

\draw [arrow] (dec1) -- node[anchor=west] {Yes} (out1);
\draw [arrow] (dec1) -- node[anchor=south] {No} (pro4a);

\draw [arrow] (out1) -- (stop);
\draw [arrow] (pro4a) -- (pro5);
\draw [arrow] (pro5) -- (out1);

\end{tikzpicture}
            \centering 
\caption{An overview of our framework }
\label{fig:framework}
%\end{center}
\end{figure}
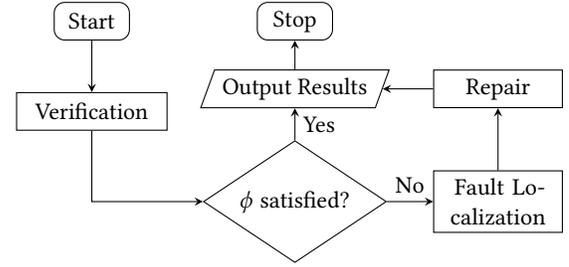

\begin{algorithm}[t]
\DontPrintSemicolon
    Verify property $\phi$ on $N$; \\
    Perform accuracy check on $N$;    \\
    \If{$\phi$ is verified}
    {
        \Return{``Verified'';}
    }
    \For{all hidden neuron $x$ in $N$}{
        Calculate $ACE_{do(x)}^{y}$;
    }
    
    Sort $ACE$ for all $x$;\\
    Candidate neurons $\leftarrow top~10\%$; \\
    
    \Do{$\#iteration \le 100$}
    {
        $Candidate~M \leftarrow$ PSO~searching;\\
        Verify property $\phi$ on $Candidate~M$; \\
        \If{$\phi$ is verified}
        {
            break;
        }
        \If{Failed to find a better location in last 10 consecutive search}
        {
            break;
        }
    }
    $M=Candidate~M$;\\
    Perform accuracy check on $M$;\\
    \If{$Acc(N)~-~Acc(M) > threshold$}
    {
        \Return{``Significant accuracy drop!"};
    }
    Verify property $\phi$ on M;\\
    \If{$\phi$ is verified}
    {
        \Return{$M$;}
       
    }
    \Else{
      \Return{``Not able to repair the network!"};
    }

\caption{$CARE(N, S,\phi)$}
\label{alg:CARE}
\end{algorithm}

\subsection{Property Verification}
In this step, we verify neural network $N$ against given property $\phi$. If $N$ satisfies $\phi$, CARE returns $N$ unchanged and terminates. %Otherwise, CARE verifies $Ms$ model accuracy as a benchmark to measure the performance drop in repaired network $M$. 
Otherwise, CARE proceeds to repair the neural network. Property verification is not the focus of this work and we adopt existing neural network verification techniques. Recently there have been multiple tools and approaches for solving the above-mentioned neural network verification problems. We omit the details on how different properties are verified and refer the readers to~\cite{FM21,verification,marabou20,Reluplex}. In our implementation, CARE's fairness verification algorithm is based on~\cite{FM21}, backdoor success rate verification is based on the method proposed in~\cite{NC} and safety property verification is based on the approach proposed in~\cite{provable}.

%In this step, existing fairness verification techniques and tools that support the architecture of $N$ can be applied, such as AEQUITAS~\cite{aequitas}, AIF360~\cite{aif360}, and VeriFair~\cite{verification}. The verification method proposed in~\cite{aequitas} is to sample a large set of instances and measure the percentage of discriminatory instances. As machine learning model inference is often optimized nowadays, this method can be very efficient. AIF360 is a framework that implements multiple algorithms to detect bias in machine learning models. It is an open-source toolkit with Python library available for easy integration. VeriFair verifies machine learning model fairness via concentration and is proved to be effective and efficient~\cite{verification}. Since it is not the focus of this work, we omit details on how fairness verification of neural networks work. In our implementation, CARE's fairness verification algorithm is based on the sampling approach proposed in~\cite{DBLP:conf/cav/BazilleGJS20}. Note that Algorithm~\ref{alg:CARE} additionally checks whether the model accuracy is satisfactory, which is a normal practice. 
%\todo{We should add a few sentence on how effective these existing verification techniques. The idea is to make it clear that these are quite scalable and thus this step won't be a bottleneck of our approach.}

\begin{example}
In our running example, CARE verifies $N$ against the fairness property $\phi$. The resultant fairness score (i.e., the probability difference of a favourable prediction for females and males) is $1.30\%$. Therefore, $N$ fails to satisfy the fairness property (which requires the fairness score to be within $1\%$). 
%\todo{Add the verification result of the other two models, one for backdoor and one for safety.}
\end{example}

\subsection{Causality Analysis}
In this step we perform causality-based fault localization. The goal is to identify a relatively small set of neurons to repair. Note that in this work, we restrict ourselves to repair by adjusting the parameters of existing neurons only, i.e., without introducing or pruning neurons. The number of parameters in a neural network is often huge and even relatively small neural network may consist of thousands of weight and bias parameters. Therefore, attempting to optimize all neural weights would be costly and hard to scale. CARE thus adopts a fault localization method to identify neurons that are likely to be responsible for the undesirable behaviors. This method is based on causality analysis carried out over all hidden neurons of $N$. 

An alternative method for fault localization is gradient-based method, which is to measure how much perturbing a particular neuron would affect the output~\cite{arachne}. CARE is designed to capture the causal influence of a hidden neuron on the performance on the given property. In contrast, gradient-based method draws conclusion based on statistical correlations, which is not ideal for model repair (since our aim is to identify neurons that \emph{cause} the defect). Next, we describe our causality-based fault localization in detail.

In recent years, causality has gained increasing attention in interpreting machine learning models~\cite{cal_att,cal2}. Multiple approaches have been designed to explain the importance of the components in a machine learning model when making a decision, based on causal attributions. Compared with traditional methods, causal approaches identify causes and effects of a model's components and thus facilitates reasoning over its decisions. 

\begin{comment}
In the following, we introduce the relevant concepts of causality. \todo{We should start with some intuitive explanation of the relevance of causality and ``interpretability'' discussed below. Otherwise, the readers might not see the connection.} Pearl~\cite{pearl} introduces three levels of interpretability:
\begin{itemize}
\item statistical interpretability,
\item causal interventional interpretability,
\item and counterfactual interpretability.
\end{itemize}
Statistical interpretability usually answers the question based on statistical associations. Typical activity involved is often seeing or observing. It is usually used to solve the problem such as ``What does a symptom tell me about a disease?" and ``What does a survey tell us about the election result?". Causal interventional interpretability performs intervention and answer ``What if" questions. Doing or intervening is often involved in this process. Questions such as ``What if I take aspirin, will my headache be cured?" and ``What if we ban cigarettes?" can be answered by causal interventional interpretability. Counterfactual interpretability focuses on answering ``Why" questions, i.e., ``Was it the aspirin that stopped my headache?", ``What if I had not been smoking in the past two years?". The typical activity involved is imagining and retrospection. According to Pearl, questions at certain level can only be answered if information at this level or higher levels is available~\cite{7tools}.
%\todo{If space allows, give one example each. More importantly, explain how these concepts are relevant to our task.}
\end{comment}

In the following, we review some concepts which are necessary for the causality analysis in this work. 
%\todo{Why SCM not others? We need to say something like SCM is most important/popular model for causality analysis.}

\begin{definition}[Structural Causal Models (SCM)~\cite{pearl_causality}]
\label{def:SCM}
A Structural Causal Model (SCM) is a 4-tuple $M(X,U,f,P_u)$ where X is a finite set of endogenous variables, U denotes a finite set of exogenous variables, f is a set of functions $\{f_1, f_2,...,f_n\}$ where each function represents a causal mechanism such that $\forall{x_i \in X}, x_i=f_i(Pa(x_i),u_i)$ where $Pa(x_i)$ is a subset of $~X\setminus\{x_i\}$, $u_i \in U$ and $P_u$ is a probability distribution over $U$.
\end{definition}

SCM serves as a framework for representing what we know about the world, articulating what we want to know and connecting the two together in a solid semantics~\cite{7tools}. It plays an important role in causality analysis and is commonly applied in many studies~\cite{cal_att,causal_dnn,cal_visual,cal_bb}. 

Figure~\ref{fig:scm} shows an example causal graph to study the efficiency of a medication on a disease, where the nodes represent variables and the edges represent cause-effect relations intuitively. In this graph, age is considered as an exogenous variable (i.e., confounder), patient's heart rate, cholesterol level and whether the medication is applied or not are endogenous variables (i.e., whether the medication is applied or not is often considered as the treatment). The outcome is the recovery rate of a patient. As illustrated in the graph, age affects the patient's health conditions such as heart rate and level of cholesterol. Furthermore, the need or feasibility of applying this medicine on patients is affected by age, i.e., young people may not necessarily take the medicine and patients above 70 years old are too risky to take the medicine.  Patient's health condition and the application of medication affect the recovery rate. Furthermore, age can affect the recovery rate directly since younger patient often recover faster than the elderly.

%\todo{The example here is too abstract; can we have some intuitive example instead?}

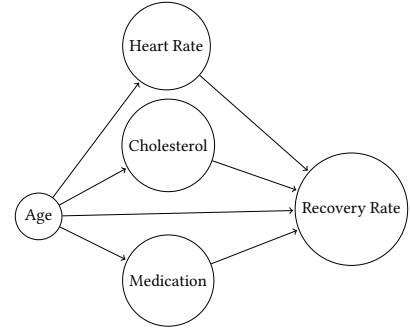
\begin{figure}
    \resizebox{0.3\textwidth}{!}{
       \begin{tikzpicture}
            % Add the notes
            \node[state]             (age) {Age};
            \node[state, above right=of age,  yshift=-0.6cm, xshift=0.5cm]  (cho) {Cholesterol}; 
            \node[state, above right=of age,  yshift=1.3cm, xshift=0.5cm] (hr) {Heart Rate};
            \node[state, below=of cho,  yshift=0.2cm] (med) {Medication};
            \node[state, right=of cho, yshift=-1.2cm, xshift=0.5cm] (rec) {Recovery Rate};

            % Connect the states with arrows
            \draw[every loop]
                (age) edge[left, auto=left]  node {} (hr)
                (age) edge[left, auto=left]  node {} (cho)
                (age) edge[left, auto=left]  node {} (med)
                (age) edge[left, auto=left]  node {} (rec)
                (hr) edge[left, auto=left]  node {} (rec)
                (cho) edge[left, auto=left]  node {} (rec)
                (med) edge[left, auto=left]  node {} (rec);
           
        \end{tikzpicture}
    }
\caption{An Example Causal Graph}
\label{fig:scm}
\end{figure}

In this work, we model neural networks as SCMs to analyze the causal relationship between hidden neurons and model's predictions. To debug and repair a neural network, we would like to measure the `contribution' of each neuron to the network misbehavior, which is referred as the causal effect. 

\begin{definition}[Average Causal Effect]
\label{def:ACE}
The Average Causal Effect (ACE) of a binary random variable $x$ (e.g., treatment), on another random variable $y$ (e.g., outcome) is defined as follows.
\begin{equation}
\label{eq:ace}
ACE = \mathbb{E}[y|do(x=1)]-\mathbb{E}[y|do(x=0)] 
\end{equation}
where $do(\cdot)$ operator denotes the corresponding interventional distribution defined by SCM.
\end{definition}
Intuitively, ACE is used to measure the causal effect of $x$ on $y$ by performing intervention on $x$. There are many other causal effect metrics in the literature such as 
\begin{itemize}
    \item average treatment effect (ATE), i.e., $ATE = \mathbb{E}[y(1)-y(0)]$, where $y(1)$ represents the potential outcome if a treatment is applied exogenously and $y(0)$ represents the corresponding potential outcome without treatment. ATE measures the effect of the treatment at the whole population level~\cite{ate},
    \item conditional average treatment effect, i.e., $CATE=\mathbb{E}[y(1)-y(0)|x_0]$~\cite{cate}, which is the average treatment effect conditioned on a particular subgroup of $X$, i.e., $X=x_0$, 
    %\todo{seems missing some intuitive explanation of this one here}
    \item and effect of treatment on the treated, i.e., $ETT=\mathbb{E}[y_{x_1}|x_0]-E[y|x_0]$ which measures the probability of $Y$ would be $y$ had $X$ been $x_1$ counterfactually, given that in the actual world $X=x_0$~\cite{causal_learning}.
\end{itemize} 
In this work, we focus on the ACE metric since we are interested in measuring the causal effect of the hidden neurons on certain output value.

\subsection{Fault Localization}
In the following, we present details on how causality analysis is used for fault localization in neural networks. 
Firstly, the neural network $N$ is modeled as an SCM. As proposed in~\cite{NNSCM}, neural networks can be interpreted as SCMs systematically. In particular, feed-forward neural networks (FFNNs) and convolutional neural networks(CNNs) can be represented as directed acyclic graphs with edges from an earlier (i.e., closer to the input layer) layer to the next layer until the output layer. The following is a proposition. %from~\cite{cal_att}.

\begin{proposition}
\label{prop:scm}
An $n$-layer FFNN or CNN $N(x_1, x_2,...,x_n)$ where $x_i$ represents the set of neurons at layer $i$, can be interpreted by SCM $M([x_1, x_2,...,x_n], U, [f_1, f_2,...,f_n], P_U)$, where $x_1$ represents neurons at input layer and $x_n$ represents neurons at output layer. Corresponding to every $x_i$, $f_i$ represents the set of causal functions for neurons at layer $i$. $U$ represents a set of exogenous random variables that act as causal factors for input neurons $x_1$ and $P_u$ is a probability distribution over $U$.
\end{proposition}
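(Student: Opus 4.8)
The plan is to establish Proposition~\ref{prop:scm} by exhibiting an explicit construction of the SCM from the layered architecture of the network and then verifying that the construction satisfies all four requirements in Definition~\ref{def:SCM}. This is fundamentally a definitional/structural argument rather than a computational one: the content is showing that the feed-forward (acyclic, layer-to-layer) structure of an FFNN or CNN is exactly the kind of structure that an SCM abstracts.

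First I would fix notation: for an $n$-layer network $N(x_1,\dots,x_n)$, treat each individual neuron as an endogenous variable, so the endogenous set is $X = \bigcup_{i=1}^{n} x_i$ (abusing notation so that $x_i$ denotes both the layer index set and the vector of neuron variables at layer $i$). For each hidden or output neuron $v$ at layer $i\ge 2$, its value is determined by the standard forward computation — an affine combination of the layer-$(i-1)$ activations followed by the (possibly nonlinear) activation function — so I would define the structural function $f_v$ to be precisely that composition, with $Pa(v)$ equal to the set of layer-$(i-1)$ neurons feeding into $v$ (all of them for a fully-connected/affine layer, or the neurons in the receptive field for a convolutional layer). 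Crucially $Pa(v)\subseteq X\setminus\{v\}$ because edges only go from layer $i-1$ to layer $i$, so $v$ never depends on itself or on any later neuron; this is where the acyclicity / DAG remark in the statement is used. For the input neurons $x_1$, there are no predecessor neurons, so I would introduce one exogenous variable $u_v\in U$ per input neuron and set $x_v = f_v(u_v) = u_v$ (or more generally $x_v=f_v(\emptyset,u_v)$), matching the ``causal factors for input neurons'' clause. Collecting these, $f = \{f_1,\dots,f_n\}$ where $f_i$ is the tuple of structural functions for the neurons at layer $i$.

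Next I would supply $P_U$: since the inputs to the network are drawn from the input space $S$ with whatever distribution is assumed there, that distribution pushed onto the exogenous variables $u_v$ indexed over input neurons gives $P_U$; one can also throw in degenerate (deterministic) exogenous noise terms for the hidden neurons if one wants the SCM in the fully standard form where every endogenous variable has its own $u_i$, but since the network is deterministic given its input, these are trivial. Then the verification is a checklist against Definition~\ref{def:SCM}: $X$ finite (finitely many neurons), $U$ finite (finitely many input neurons), each $x_i = f_i(Pa(x_i),u_i)$ with $Pa(x_i)\subseteq X\setminus\{x_i\}$ (immediate from the layered wiring), and $P_U$ a distribution over $U$ (just constructed). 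I would also note that the induced graph is a DAG with edges respecting the layer order, which is what makes the $do(\cdot)$ operator in Definition~\ref{def:ACE} well-defined on this SCM — connecting the proposition to its later use in fault localization.

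The main obstacle is not technical difficulty but precision of the modeling choices, and I would flag two points. First, the granularity question: the statement writes $f_i$ as ``the set of causal functions for neurons at layer $i$,'' so I must be careful to treat neurons (not whole layers) as the endogenous variables, while still packaging them layer-wise; I would state explicitly that affine layers and activation layers can either be kept as separate SCM layers or composed into one, and that the proposition holds under either convention (for CNNs the convolution, bias, and activation compose into the per-neuron $f_v$, with weight sharing simply meaning several $f_v$ share parameters — harmless for the SCM structure). Second, I should be explicit that the exogenous variables are attached only where needed (the input layer) and that determinism of hidden neurons is not in conflict with the SCM definition, since Definition~\ref{def:SCM} allows $Pa(x_i)$ to be empty only for sources and otherwise the $u_i$ may be degenerate. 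With those conventions pinned down, the proof is a short ``by construction, and checking the four clauses'' argument, and I would present it at that level of detail rather than grinding through the forward-propagation formulas.
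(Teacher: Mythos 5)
Your proposal is correct and takes essentially the same route as the paper: write each neuron at layer $i$ as a structural function of the previous layer's neurons (the receptive field in the convolutional case) and attach exogenous noise variables only to the input-layer neurons, so the layered acyclic wiring directly instantiates Definition~\ref{def:SCM}. The paper's own proof is simply a terser version of this construction (citing prior work for the FFNN case and sketching the identical argument for CNNs), so your explicit checklist, the remarks on weight sharing, and the choice of $P_U$ are elaborations rather than a different argument.
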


\begin{proof}

In the scenario of FFNN, the proof of Proposition~\ref{prop:scm} follows that provided in~\cite{cal_att}. In the scenario of CNN, similar to FFNN, neurons at each layer can be written as functions of neurons at its previous layer, i.e., $\forall{i}:~\forall{x_{i_j}} \in x_i:~x_{i_j} = f_{i_j}(x_{i - 1})$, where $x_{i_j}$ represents the $j^{th}$ neuron at layer $i$. Neurons at input layer ($x_1$) can be assumed to be functions of independent noise variables $U$ such that $\forall{x_{1_j} \in x_1}~and~u_j \in U:~x_{1_j} =f_{1_j}(u_j)$. Thus a CNN can be equivalently expressed by a SCM $M([x_1, x_2,...,x_n], U, [f_1, f_2,...,f_n], P_U)$.

\end{proof}

Figure \ref{fig:ffnn_scm} depicts the SCM of a 3-layer FFNN. The dotted circles represent exogenous random variables which act as causal factors for the input neurons. In this work, we assume that neurons at the input layer are not causally related to each other but can be jointly caused by a latent confounder.

%\todo{CNN modelled as SCM here.}

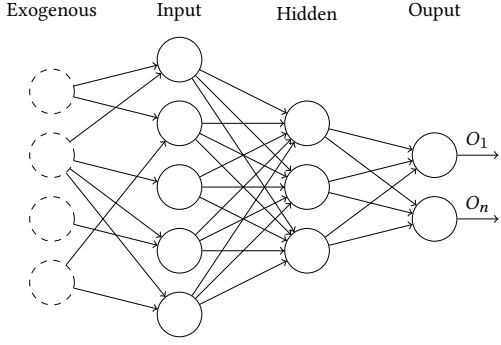
\begin{figure}
    \tikzset{%
    every neuron/.style={
        circle,
        draw,
        minimum size=0.7cm
      }
    }
    
    \tikzset{% \draw [dashed, line width=1pt]circle(0.655cm);
    every exo/.style={
        circle,
        draw,
        dashed,
        minimum size=0.7cm
      }
    }
\resizebox{0.38\textwidth}{!}{
    \begin{tikzpicture}%[x=1.5cm, y=1.5cm, >=stealth]
        \pgfplotsset{%
        width=0.3\textwidth,
        height=0.2\textwidth
        }
        
        \foreach \m/\l [count=\y] in {1,2,3,4}
          \node [every exo/.try, exo \m/.try] (exogenous-\m) at (0,2-\y) {};
        
        \foreach \m [count=\y] in {1,2,3,4,5}
          \node [every neuron/.try, neuron \m/.try ] (input-\m) at (2,2.5-\y) {};
          
        \foreach \m [count=\y] in {1,2,3}
          \node [every neuron/.try, neuron \m/.try ] (hidden-\m) at (4,1.5-\y) {};
        
        \foreach \m [count=\y] in {1,2}
          \node [every neuron/.try, neuron \m/.try ] (output-\m) at (6,1-\y) {};
        
        %\foreach \l [count=\i] in {1,2,3,n}
        %  \draw [<-] (input-\i) -- ++(-1,0)
        %    node [above, midway] {$I_\l$};
        
        %\foreach \l [count=\i] in {1,n}
        %  \node [above] at (hidden-\i.north) {$H_\l$};
        
        \foreach \l [count=\i] in {1,n}
          \draw [->] (output-\i) -- ++(1,0)
            node [above, midway] {$O_\l$};
        
        %\foreach \i in {1,...,4}
        %  \foreach \j in {1,...,5}
        %   \draw [->] (exogenous-\i) -- (input-\j);
        \draw [->] (exogenous-1) -- (input-1);
        \draw [->] (exogenous-1) -- (input-2);
        \draw [->] (exogenous-2) -- (input-1);
        \draw [->] (exogenous-2) -- (input-3);
        \draw [->] (exogenous-2) -- (input-4);
        \draw [->] (exogenous-2) -- (input-5);
        \draw [->] (exogenous-3) -- (input-4);
        \draw [->] (exogenous-4) -- (input-2);
        \draw [->] (exogenous-4) -- (input-5);
            
        \foreach \i in {1,...,5}
          \foreach \j in {1,...,3}
            \draw [->] (input-\i) -- (hidden-\j);
        
        \foreach \i in {1,...,3}
          \foreach \j in {1,...,2}
            \draw [->] (hidden-\i) -- (output-\j);
        
        \foreach \l [count=\x from 0] in {Exogenous, Input, Hidden, Ouput}
          \node [align=center, above] at (\x*2,2) {\l };
    
    \end{tikzpicture}
}
\caption{FFNN as an SCM}
\label{fig:ffnn_scm}
\end{figure}

Next we define the attribution problem, i.e., what is the causal influence of a particular hidden neuron on model defect. Recall that in Definition~\ref{def:ACE}, we define the ACE of a binary variable $x$ on output variable $y$. However, we cannot apply the definition directly for two reasons. First, the domain of neural networks' neurons is mostly continuous, not binary-valued. Second, we are interested in finding causal effects on the model defect rather than certain output variable. Hence we propose the following definition.

\begin{definition}[Causal Attribution]
We denote the measure of the undesirable behavior of given neural network $N$ as y. The Causal Attribution of a hidden neuron $x$ to $N's$ defect $y$ is:

\begin{equation}
\label{eq:attri}
ACE_{do(x=\beta)}^{y} = \mathbb{E}[y|do(x=\beta)]
\end{equation}
\end{definition}

%We refer the readers to~\cite{cal_att} for the proof of Proposition \ref{prop:scm} and the reasoning of selecting average ACE of $x_i$ as $baseline_{x_i}$. 

Next, we calculate the interventional expectation of $y$ given intervention $do(x=\beta)$ and we thus have the following.
\begin{equation}
\label{eqn:e_ace}
\mathbb{E}[y|do(x=\beta)]=\int_y yp(y|do(x=\beta))dy
\end{equation}
%\todo{We need to add a sentence to explain the intuition of the definition} 
Intuitively, causal attribution measures the effect of neuron $x$ being value $\beta$ on $y$. We evaluate Equation~\ref{eqn:e_ace} by sampling inputs according to their distribution whilst keeping the hidden neuron $x = \beta$, and computing the average model undesirable behavior $y$. 

%\todo{Formalize the following, using mathematical notations. Right now, it is too informal and not clear what is the input and what is the output. }

In this work, we calculate $y$ according to the desirable property $\phi$. Let $N_t(x_{ip})$ be the prediction value of class $t$ on input $x_{ip}$ we have:
\begin{itemize}
    \item For fairness repair, we measure model unfairness $y$ by taking the difference of the prediction value on favourable class $t$ w.r.t. samples that only differ by the sensitive feature. Let $x_{ip}$ and $x_{ip}'$ be a pair of discriminatory instances~\cite{adf2020} (that only differ by the sensitive feature). We have $y_{fair} = |~N_t(x_{ip}) - N_t(x_{ip}')~|$ and we calculate $ACE_{do(x=\beta)}^{y_{fair}}$ as the causal attribution of neuron $x$ w.r.t. fairness property.
    \item For backdoor removal, we measure $y$ by calculating the prediction value on target class $t$, i.e, $y_{bd} = N_t(x_{ip})$. Thus we aim to calculate $ACE_{do(x=\beta)}^{y_{bd}}$ for all hidden neurons. Note $x_{ip}$ can be clean inputs or adversarial inputs.
    \item For safety property violation repair, we measure $y$ by calculating the prediction value on desirable class $t$, i.e., $y_{safe} = N_t(x_{ip})$. We calculate $ACE_{do(x=\beta)}^{y_{safe}}$ as the causal attribution for hidden neuron $x$.
\end{itemize} 
 For other properties, CARE is able to calculate the causal attribution of each neuron on $y$ as long as the corresponding $y$ is specified. 

Thus by calculating the causal attribution of each hidden neuron on $y$, we are able to identify neurons that are most likely the cause to the unexpected behavior. Afterwards, the identified neurons are used as candidates for model repair in our next step. The time spent on this step depends on the total number of neurons $n$ to analyze in the given neural network. The time complexity for causality-based fault localization is thus $\textbf{O}(n)$.

%We discuss and evaluate the time complexity of this step in RQ3.
%\todo{If space allows, discuss the complexity of the approach here.}

\begin{example}
In our running example, CARE conducts causality-based fault localization on all hidden neurons of $N$ to identify neurons that are the most ``responsible" to $N's$ prediction bias. CARE generates 12.4k samples to analyze all neurons and the total time taken is 120s.
%\todo{Present some details, like how many samples etc.}
%Figure~\ref{fig:example} shows the scatter plot for the average causal attribution of each hidden neuron, where $L\_N$ represents $N^{th}$ neuron at layer $L$. 
It is observed that a few neurons (such as the $11^{th}$ neuron at $3^{rd}$ hidden layer and the $3^{rd}$ neuron at $5^{th}$ hidden layer) have a contribution which is ``outstanding". This is good news to us as it implies that it is possible to improve the fairness by modifying the weights of only a few neurons. CARE then selects the top $10\%$ of total number of neurons (i.e., 13 in total for this network) to be used as repair candidates.
%\todo{If space allows, we need to some the other two examples as well.}
\end{example}

%\begin{comment}
 
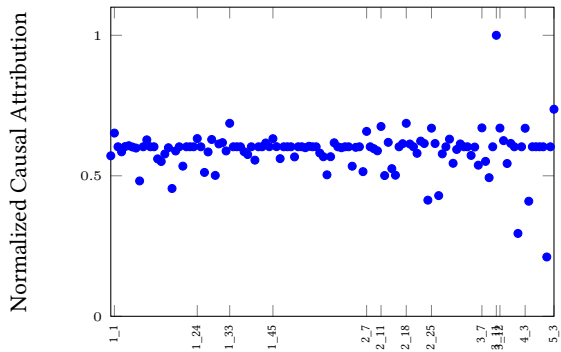
\begin{figure}
\begin{tikzpicture}
\pgfplotsset{%
    width=0.42\textwidth,
    height=0.32\textwidth
}
\pgfplotsset{every tick label/.append style={font=\tiny}}
\begin{axis}[
    ylabel={Normalized Causal Attribution},
    xmin=0, xmax=123,
    ymin=0.0, ymax=1.1,
    xtick = {1,24,33,45,71,75,82,89,103,107,108,115,123},
    xticklabels={1\_1,1\_24,1\_33,1\_45,2\_7,2\_11,2\_18,2\_25,3\_7,3\_11,3\_12,4\_3,5\_3},
    x tick label style={rotate=90,anchor=east}],
    ytick={0,0.2, 0.4, 0.6, 0.8, 1},
    %legend pos=north west,
    ymajorgrids=false,
    grid style=dashed,
    legend style={
    legend pos= north east,
    }
]

\addplot[
    only marks,
    mark size=1.5pt,
    %scatter,    
    %mark=square,
    color = blue,
    ]
    coordinates {
    (0,0.571140990829037)(1,0.651976568069838)(2,0.603267365842595)(3,0.585329419444066)(4,0.604115342368248)(5,0.607080730164652)(6,0.602490521296457)(7,0.598738842914739)(8,0.481721420024747)(9,0.603267365842595)(10,0.627973056708016)(11,0.603267365842595)(12,0.603267365842595)(13,0.560304445078594)(14,0.550911141990349)(15,0.577648452505017)(16,0.599957272078553)(17,0.454800242920928)(18,0.58829399692471)(19,0.603267365842595)(20,0.534325645705698)(21,0.603267365842595)(22,0.603267365842595)(23,0.603267365842595)(24,0.632556828755716)(25,0.603267365842595)(26,0.511989073452484)(27,0.584926733428382)(28,0.629221581314646)(29,0.50141778386179)(30,0.613395934500678)(31,0.618008713872888)(32,0.588469894097546)(33,0.68673086587589)(34,0.603267365842595)(35,0.603267365842595)(36,0.603267365842595)(37,0.585815246771789)(38,0.575646229640254)(39,0.603267365842595)(40,0.555670755578488)(41,0.603267365842595)(42,0.603267365842595)(43,0.616754680858597)(44,0.603577511094338)(45,0.63205937767204)(46,0.603717820707171)(47,0.561106987710295)(48,0.603267365842595)(49,0.603267365842595)(50,0.603267365842595)(51,0.567270369869375)(52,0.603267365842595)(53,0.603267365842595)(54,0.599967268079312)(55,0.605106325779989)(56,0.603267365842595)(57,0.603267365842595)(58,0.581038371454958)(59,0.567873800035709)(60,0.503473083760344)(61,0.56794723815664)(62,0.617546323152091)(63,0.603267365842595)(64,0.599871951218455)(65,0.603267365842595)(66,0.603267365842595)(67,0.534149853117362)(68,0.601303376457428)(69,0.603267365842595)(70,0.514801975271146)(71,0.658056409181758)(72,0.603267365842595)(73,0.596651632221142)(74,0.588915560416809)(75,0.675541182328619)(76,0.500901880989274)(77,0.61906609102851)(78,0.525500401197601)(79,0.502086182698758)(80,0.603267365842595)(81,0.614232479950806)(82,0.686756186309212)(83,0.613148168343575)(84,0.603267365842595)(85,0.580044150145765)(86,0.623474864306573)(87,0.614931876586814)(88,0.41332545006709)(89,0.669454049356509)(90,0.614950660890937)(91,0.429480033256999)(92,0.578056147504263)(93,0.603267365842595)(94,0.630422804027546)(95,0.544523907372246)(96,0.593707180956621)(97,0.613409764399897)(98,0.603267365842595)(99,0.603267365842595)(100,0.572508333705071)(101,0.602299543043309)(102,0.537769111949337)(103,0.670557876563125)(104,0.5513347749449)(105,0.493294497540777)(106,0.603267365842595)(107,1)(108,0.669703731510302)(109,0.625176347889641)(110,0.543768932224302)(111,0.615304386943625)(112,0.603267365842595)(113,0.29489135932914)(114,0.603267365842595)(115,0.669187710202817)(116,0.409508315980534)(117,0.603267365842595)(118,0.603267365842595)(119,0.603267365842595)(120,0.603267365842595)(121,0.211110719864358)(122,0.603267365842595)(123,0.736880505409135)
    };
    %\addlegendentry{Hidden Neurons}    
    
\end{axis}
\end{tikzpicture}
%\centering
%\begin{minipage}{0.47\textwidth} % choose width suitably
%\includegraphics[width=\linewidth]{mygraphicfile}
%{Fx denoting feature x. Nx\_y represents neuron y at hidden layer x.\par} %In this experiment $\epsilon = 0.01, \delta = 0.001$.
%\end{minipage}
\caption{Causal Attribution of Neurons in the Example}
\label{fig:example}
\end{figure}
%\end{comment}

\subsection{Network Repair}
\label{sec:repair}
Next, we present how CARE performs neural network repair. Similar to traditional software programs, neural networks can go wrong and incorrect behaviors should be fixed. While existing efforts mainly focus on how to retrain the neural network to repair the unexpected behaviors, there are a few recent works~\cite{nnrepair,minimal20,digits} that address this problem by modifying the parameters of the network. In the following, we first briefly explain why these approaches do not apply in our setting. Goldberger \emph{et al.} propose a verification-based method to adjust the neural weights of the output layer only. Their approach has limited effectiveness in repairing neural networks as demonstrated in~\cite{gl_rep}. Their heuristics (i.e., focusing on neurons in the output layer) is not justified from our point of view as the neurons causing the problematic behaviors may not be in the output layer. In~\cite{digits}, an approach is proposed to repair a model through inductive synthesis. Their approach has scalability issues as a large model size leads to a large search space. While NNRepair~\cite{nnrepair} provides a constraint-based approach to fix the logic of a neural network at an intermediate layer or output layer, its fault localization and repair applies to a single layer only but the actual fault could be across multiple layers. Thus its repair performance may not be ideal and we provide a comparison in 
Section~\ref{subsec:rq}.

In our approach, we propose to address the repair problem through optimization, i.e., based on the Particle Swarm Optimisation (PSO) algorithm. 
%\todo{We need to justify why PSO is the choice and not other typical optimization techniques such as GA, MCMC and so on.} 
Besides PSO, there are many other optimization algorithms. Genetic optimization algorithms (genetic algorithm for example) usually do not handle complexity in an efficient way~\cite{ga_pso} since the large number of elements undergoing mutation causes a considerable increase in the search space. In comparison, PSO requires smaller number of parameters and thus has lower number of iterations. 
%GA often converges towards a local optimum or arbitrary points but PSO tries to find the global optima~\cite{ga_pso}. 
Another type of optimization is stochastic optimization algorithm. Markov Chain Monte Carlo (MCMC) as an representative uses a sampling technique for global optimization. However, MCMC often encounter either slow convergence or biased sampling~\cite{mcmc}. Hence, in this work we select intelligence based algorithm PSO. 

The idea is to search for small adjustments in weight parameters of the neurons identified in the previous step such that the specified property is satisfied. PSO simulates intelligent collective behavior of animals such as schools of fish and flocks of birds. It is known to be particularly effective for optimization in continuous domains~\cite{pso}. In PSO, multiple particles are placed in the search space. At each time step, each particle updates its location $\overrightarrow{x_i}$ and velocity $\overrightarrow{v_i}$ according to an objective function. That is, the velocity is updated based on the current velocity $\overrightarrow{v_i}$, the previous best location found locally $\overrightarrow{p_i}$ and the previous best location found globally $\overrightarrow{p_g}$. Its location is updated based on the current location and velocity. We write $R(0,c)$ to denote a random number uniformly sampled from the range of $[0, c]$. Formally, the PSO update equation is as follows~\cite{pso1998}.
\begin{align}
    \overrightarrow{v_i} & \leftarrow \omega\overrightarrow{v_i} + R(0, c_1)(\overrightarrow{p_i} - \overrightarrow{x_i}) + R(0, c_2)(\overrightarrow{p_g} - \overrightarrow{x_i}) \\
    \overrightarrow{x_i} & \leftarrow \overrightarrow{x_i} + \overrightarrow{v_i}
\end{align}
where $\omega$, $c_1$ and $c_2$ represent inertia weight, cognitive parameter and social parameter respectively. In PSO, the fitness function is used to determine the best location.

In CARE, the weights of the identified neurons are the subject for optimization and thus are represented by the location of the particles in PSO. The initial location of each particle is set to their original weight and the initial velocity is set to zero.

As defined in Section \ref{sec:preli}, our problem is to repair a given neural network $N$ against property $\phi$ while minimizing the accuracy drop. Therefore, two essential components need to be considered in the optimization process: 1) model performance on property $\phi$ and 2) model accuracy. In CARE, we measure the model performance based on the property $\phi$ specified. Formally, the fitness function of PSO is defined as follows.
\begin{equation}
    fitness = (1 - \alpha)~UB + \alpha (1 - accuracy)
\end{equation}
where UB (undesirable behavior) is a quantitative measure on the degree of violating property $\phi$; constant parameter $\alpha \in (0,1)$ determines the relative importance of the accuracy. For fairness improvement task, we randomly sample a large set of instances and $UB$ is measured by the percentage of individual discriminatory instances within the set, i.e., let $N(x)$ be the prediction of $x$ and $(x, x')$ be a pair of discriminatory instances in the sample set, $UB = P(N(x) \neq N(x'))$; for backdoor removal task, we add backdoor trigger (if it is known) to all samples in testing set and measure $UB = SR(t)$ following Definition~\ref{def:sr}; for safety task, we randomly sample a large set of instances and measure $UB = VR(\rho)$ following Definition~\ref{def:vr}. Intuitively, our optimization target is to make the repaired model satisfy given property $\phi$ while maintaining the accuracy. 
%\todo{Update the text to measure ``sentence''; We may have to discuss a bit on how many samples we use and why. If it is hard to explain, we can leave it as an empirical parameter.}
%For fairness repair tasks, we use the percentage of individual discriminatory instances~\cite{adf} in the sampled set as an indicator of model unfairness. Therefore, an increase in the percentage of discriminatory instances incur penalty to our optimization. For backdoor removal task, we use the attack success rate (the percentage of adversarial inputs classified into the target label) as an indicator. For safety property violation repair task, we use the rate of property violation on sampled inputs as an indicator. 
Note that, for other properties, CARE is able to conduct repair as long as a performance indicator is specified based on the given property properly. 
%Furthermore, model accuracy is monitored in the process to make sure the semantic difference between original model $N$ and repaired model candidate $M$ is minimized. 

%\todo{The accuracy check part can be explained a bit more. It is currently not part of the overall workflow. What happens if accuracy is too low?}
\begin{example}
In our running example, CARE applies PSO to generate a repair network $M$ that will improve the fairness performance of $N$. The weights of the identified 13 neurons in the previous step are the subjects for the optimization. We set $\alpha$ to $0.8$ so that PSO can optimize for fairness without sacrificing model accuracy too much. PSO terminates at $17^{th}$ iteration where no better location is found in the last 10 consecutive iterations. By adjusting neural weights accordingly, $M$ is generated. CARE performs fairness and accuracy check on $M$ and the maximum probability difference is $0.007$ with accuracy of $0.86$ (original accuracy is $0.88$). Thus, CARE successfully repairs $N$ with a fairness improvement of $46.1\%$ and the model accuracy is mildly affected. 
\end{example}

%\todo{If space permits, we can add an algorithm to show the approach, including all the steps.}

\section{Implementation and Evaluation} \label{sec:imple}
%SOCRATES is a unified framework for developing and evaluating verification techniques for neural networks. Following the architecture of SOCRATES, user inputs are provided in a JSON format. The inputs include the neural netowork model to repair, the target fairness property as well as the required parameters such as parameter $\alpha$  used in the PSO fitness function. Our code is open-sourced and is available at~\cite{anoynoumous}. 
In the following, we conduct a set of experiments to evaluate CARE. We demonstrate the technique in the context of three different tasks: 1) neural network fairness improvement, 2) neural network backdoor removal and 3) neural network safety property violation repair. All experiments are conducted on a machine with 1 Dual-Core Intel Core i5 2.9GHz CPU and 8GB system memory. To reduce the effect of randomness, all experimental results are the average of five runs if randomness is involved. 

\subsection{Experiment Setup} \label{sec:dataset}
In the following, CARE is evaluated over multiple datasets by answering mutpile research questions (RQs). The details of the datasets are summarized as follows:

For fairness repair, CARE is evaluated over three datasets that are commonly used in machine learning model fairness testing~\cite{aequitas,SG,adf2020,counterfactual}, 

\begin{itemize}
    \item \emph{Census Income~\cite{census}:}  This dataset consists of 32,561 training instances containing 13 features and is used to predict whether an individual income exceeds \$50K per year. Among all attributes, gender, age and race are three protected characteristics. The labels are if an individual income exceeds \$50K per year or not.
    \item \emph{German Credit~\cite{credit}:}  This dataset consists of 1000 instances with 20 features and is used to assess an individual's credit. Here, age and gender are the protected features. The labels are whether a person's credit is good or not.
    \item \emph{Bank Marketing~\cite{bank}:} This dataset consists of 45,211 instances and there are 17 features. Among them, age is the protected feature. %The dataset aims at training models to predict whether the client will subscribe a term deposit. 
    The labels are whether the client will subscribe a term deposit or not.
    %\item \emph{Jigsaw Comment~\cite{jigsaw}:} This dataset consists of around 313,000 text comments with average length of 80 words classified into six categories of toxicity (i.e., toxic, severe toxic, obscene, threat, insult and identity hate) and non-toxic. The protected features analysed are race and religion.
\end{itemize}

We train three feed-forward neural networks
%and one LSTM RNN 
following standard practice and run CARE to repair the unfair behavior of each model against the corresponding protected features.

For backdoor removal, CARE is evaluated over three datasets:
\begin{itemize}
    \item \emph{German Traffic Sign Benchmark Dataset (GTSRB)~\cite{GTSRB}:} This dataset consists of 39.2K training instances and 12.6K testing instances of colored images. The task is to recognize 43 different traffic signs. We train a CNN network consists of 6 convolutional layers and 2 dense layers.
    \item \emph{MNIST~\cite{mnist}:}  This dataset consists of 70K instances of hand-written digits as gray-scale images. We train a standard 4-layer CNN network to recognize the 10 digits (0-9).
    \item \emph{Fashion-MNIST~\cite{fashion}:} This data set consists of 70K instances of 10 fashion categories, e.g., dress, coat and etc. Each sample is a 28x28 grayscale image. We train a CNN network consists of 3 convolutional layers and 2 dense layers.
\end{itemize}

%\todo{Just one network in this category seems few - are there more? In fact, it will sound better if we have 3 models for each property.}
For safety violation repairing, CARE is evaluated over three ACAS Xu~\cite{acas} networks. ACAS Xu contains an array of 45 DNNs that produces horizontal maneuver advisories of unmanned version of Airborne Collision Avoidance System X. As discovered in~\cite{Reluplex,Pham2020SOCRATESTA}, some DNNs violates certain safety properties, e.g., DNN $N_{2,9}$ violates the safety property $\phi{_8}$. We apply CARE on 3 sub networks $N_{2,9}$, $N_{3,3}$ and $N_{1,9}$ against 3 properties $\phi_8$, $\phi_2$ and $\phi_7$ respectively, aiming to repair the misbehavior.
Table~\ref{tab:dataset} shows the details of the trained networks used in our experiment.

%\todo{The numbers in the following paragraph are introduced without justification. I suggest we move the paragraph to the experiment section.}
In PSO, we follow the general recommendation in~\cite{psop} and set parameter $\omega = 0.8$, $c_1=c_2=0.41$ and number of particles to 20. 
%We set the bounds of weight adjustment to $(0,2)$, i.e., 0 to 2 times of the original weight. 
The maximum number of iteration is set to 100. To further reduce the searching time, we stop the search as soon as the property is satisfied or we fail to find a better location in the last 10 consecutive iterations. Note that CARE performs accuracy check after PSO finds an $M$. If the accuracy drop is significant, i.e., bigger than a threshold of $3\%$, CARE returns error message and a larger value of $\alpha$ shall be set in such scenarios.

\begin{table}[t]
\centering
\caption{Neutral Networks Used in Our Experiments}
\begin{tabular}[t]{p{0.5cm}p{1.4cm}p{2.9cm}p{0.9cm}p{1.0cm}}
\toprule
 Model & Dataset&  Architecture & \#Neuron&  Accuracy \\
\midrule
$NN_1$&Census&  6-layer FFNN& 139   & 0.8818\\
%NN_2&Census&  6-layer FFNN& 263   & 0.8160\\
%NN_3&Census&  6-layer FFNN& 511   & 0.8806\\
$NN_2$&Credit&    6-layer FFNN& 146   & 1.0\\
$NN_3$&Bank&    6-layer FFNN& 143   & 0.9226\\

$NN_4$&GTSRB&    6-Conv + 2 Dense CNN& 107,595   & 0.9657\\
$NN_5$&MNIST&    2-Conv + 2 Dense CNN& 31,610   & 0.9909\\
$NN_6$&Fashion&    3-Conv + 2 Dense CNN& 67,226   & 0.9136\\
$NN_7$&ACAS $N_{2,9}$&    6-layer FFNN& 305   & - \\
$NN_8$&ACAS $N_{3,3}$&    6-layer FFNN& 305   & - \\
$NN_9$&ACAS $N_{1,9}$&    6-layer FFNN& 305   & - \\

%Jigsaw&  RNN& LSTM&   8-cell&   0.9166\\
\bottomrule
\end{tabular}
\label{tab:dataset}
\end{table}

\subsection{Research Questions and Answers}\label{subsec:rq}

In the following, we report our experiment results and answer multiple research questions. \\

\noindent \emph{RQ1: Is CARE successful in neural network repair?}
To answer this question, we systematically apply CARE to the above-mentioned neural networks.
%\todo{Do we have baselines to compare with for each kinds of verification tasks? If there are, we should compare.}

For \emph{fairness repair tasks}, to better evaluate the performance of CARE, we set a strict fairness requirement (i.e., $\xi = 1\%$) to make sure all models fail the fairness property and CARE performs fairness repair on all of them (note that some of models fail fairness even if $\xi$ is set to be a realistic value of $5\%$). Table~\ref{tab:rq1} summarizes the results, where the columns show model, the protected feature, the unfairness (maximum probability difference) before and after repair and model accuracy before and after repair. Note that in this experiment, input features are assumed to be independent and normal distribution is followed in our sampling process. The number of neurons to repair is set to be no more than $10\%$ of total number of neurons in the given network. As illustrated in Table \ref{tab:rq1}, among all cases, $NN_2$ and $NN_3$ show alarming fairness concern (i.e., with unfairness above $5\%$). CARE successfully repairs all neural networks, with an average fairness improvement of $61.9\%$ and maximum of $99.2\%$. In terms of model accuracy, either the accuracy is unchanged ($NN_2$) or has only a slight drop ($NN_1$ and $NN_3$). %Note that we can adjust the parameter $\alpha$ to achieve better fairness or accuracy depending on the user requirement.

We further compare the performance of CARE with the state-of-the-art work~\cite{FM21} for this task. The method proposed in~\cite{FM21}, relies on learning a Markov Chain from original network and performing sensitivity analysis on it. Then optimization is applied to find small adjustments to weight parameters of sensitive neurons for better fairness. We run experiments on $NN_1$, $NN_2$ and $NN_3$ and the performance comparison is shown in Table~\ref{tab:rq5.comp} (DTMC represents the method proposed in~\cite{FM21}). CARE is able to improve fairness by $61.9\%$ on average and the model accuracy drops by $1.7\%$, while DTMC only improves fairness by $45.1\%$ at a higher cost of $2.2\%$. 
 
\begin{comment} 
\begin{figure}[!tbp]
  \centering
  \subfloat[Real]{\includegraphics[width=0.12\textwidth]{images/real.png}}
  \hspace{3em}%
  \subfloat[Reversed]{\includegraphics[width=0.12\textwidth]{images/re.png}}
\caption{Trigger comparison}
\label{fig:trigger}
\end{figure} 
\end{comment}
 
For \emph{backdoor removal tasks}, we train neural networks $NN_4$, $NN_5$ and $NN_6$ following the attack methodology proposed in BadNets~\cite{badnet}. We randomly chose a target label for each network and vary the ratio of adversarial inputs in training to achieve a high attack success rate of $>95\%$ while maintaining high accuracy. The trigger is a white square located at the bottom right of the image with size $4 \times 4$ (around $1\%$ of the entire image). In the experiment, we assume the trigger is unknown (as it is often the case in real application) and testing dataset is available. We adopt the technique proposed in~\cite{NC} to reverse engineer the trigger. As discovered in~\cite{NC}, the reverse-engineered trigger is not perfect in terms of visual similarity. However, CARE is able to remove the backdoor effectively as shown below.
%But according to~\cite{NC}, both reversed trigger and actual real trigger activate the same backdoor-related neurons. 
%Since it is often the case that the real trigger is unknown, we use reversed trigger in our experiment.
%\todo{Is the reverse-engineered trigger perfect? If not, we should discuss it here - it could be an advantage.} 
 Since $NN_4$, $NN_5$ and $NN_6$ are CNNs and convolutional layers tend to extract local features such as edges, textures, objects and scenes~\cite{cnn}, we apply CARE on dense layers only. We measure the attack success rate (SR) by adding the reverse-engineered trigger to all images in the testing set. As shown in Table~\ref{tab:rq1.backdoor}, for all the three models, CARE is able to mitigate the backdoor attack by reducing the attack SR from over $98\%$ to less than $1\%$, while the accuracy of the repaired networks either maintains the same ($NN_4$) , reduced by $<1\%$ ($NN_5$) or even improved ($NN_6$).
 
 Furthermore, we compare CARE with the state-of-the-art neural network repair technique proposed in~\cite{nnrepair} named NNRepair. NNRepair leverages activation map~\cite{activation} to perform fault localization of a buggy neural network and fix undesirable behaviors using constraint solving. We conduct experiments over CARE and NNRepair on two CNN models (experiment subjects of NNRepair) trained on MNIST~\cite{mnist} and CIFAR10~\cite{cifar10} datasets. The average performance comparison is shown in Figure~\ref{tab:rq5.comp}. CARE is able to reduces the SR by $99.9\%$ with accuracy drop of $1.5\%$ on average. In contrast, NNRepair only reduce SR by $19.3\%$ and model accuracy drops by $5.9\%$. 

For \emph{safety property repair tasks}, we use $N_{2,9}$, $N_{3,3}$ and $N_{1,9}$ of ACAS Xu networks as our $NN_7$, $NN_8$ and $NN_9$\footnote{A bug was found in the inital implementation for the sampling range of $NN_9$ and all the related results are updated accordingly after the fix}. These networks take 5-dimentional inputs of sensor readings and output 5 possible maneuver advises. Katz~\emph{et al.}~\cite{Reluplex} and Long~\emph{et al.}~\cite{Pham2020SOCRATESTA}, demonstrate that $N_{2,9}$ violates the safety property $\phi_{8}$, $N_{3,3}$ violates property $\phi_2$ and $N_{1,9}$ violates property $\phi_7$. Therefore, we apply CARE on these 3 networks to improve their performance on the corresponding property. In this experiment, for each network, we randomly sample 10K counterexamples to the property as the evaluation set and 10K instances that are correctly classified by the original neural network as the drawdown set. We measure the violation rate (VR) on each set to evaluate the performance. As shown in Table~\ref{tab:rq1.acas}, for all the three networks, CARE successfully brings down the violation rate from $1.0$ to $<1\%$ in evaluation set, while the performance in drawdown set is not affected.

To further evaluate the performance of CARE on safety property repair tasks, we compare CARE with the state-of-the-art approach proposed in~\cite{provable} named PRDNN. PRDNN~\cite{provable} introduces decoupled neural network architecture and solves neural network repair problem as a linear programming problem. 
%We follow the method proposed by~\cite{provable}, where we randomly sample 5,466 counterexamples to the safety property to measure violation rate and the same amount of samples that are classified correctly by the original network to measure the miss-classified rate after repair as the cost. 
We apply PRDNN on the above mentioned three ACAS Xu networks. The average result of $NN_7$, $NN_8$ and $NN_9$\footnote{PRDNN hangs with our initial experimental setup but we resolved this issue by setting up another Ubuntu virtual machine with more disk size.} is shown in Table~\ref{tab:rq5.comp}. Both tools are able improve the performance effectively while CARE outer-performs PRDNN by $2.2\%$ although the cost is slightly higher (still below $1\%$)\footnote{A bug was found in the initial code modifications to PRDNN for $NN_8$ and $NN_9$ and the results are updated accordingly after the fix.}\footnote{We use the same validation set generated by CARE to evaluate both tools.}.
 
 Thus to answer RQ1, we say CARE is able to repair given buggy neural network successfully for various properties while maintaining high model accuracy. Furthermore, CARE outer-performs existing works for different tasks.\\

 \noindent \emph{RQ2: What is the effect of parameter $\alpha$, i.e., the trade-off between the degree of property violation and accuracy?} Recall that the value of $\alpha$ controls the trade-off between fairness and accuracy. To understand the effect of $\alpha$'s value, we perform another set of experiments. 
% \todo{We need to have the data for all the models; and draw all of them in the same figure.}

Figure~\ref{fig:alpha} illustrates the result on all neural networks, where the first plot shows the performance improvement of repaired network compared to original network and the second plot shows the cost measured by model accuracy drop. As $\alpha$ increases from $0.1$ to $0.9$, the importance of model accuracy over repair objective in PSO fitness function increases. As shown in the plots for $NN_1$, the model accuracy is quite sensitive with respect to the value of $\alpha$, i.e., when $\alpha=0.2$ model accuracy drops by $0.05$ (from $0.88$ to $0.83$). Although a smaller $\alpha$ results in better fairness, we select a larger $\alpha$ to keep high model accuracy.
As shown in the plots for $NN_2$, model accuracy is stable over different $\alpha$ values. Similar to the previous case, smaller $\alpha$ results in better fairness. Therefore, we select a small $\alpha$ for more effective fairness repair. 
%We omit the results for other models due to space limitation.
As for $NN_3$, a large $\alpha$ improves model accuracy significantly. When $\alpha$ is greater than 0.7, the model accuracy is even higher than the original network. That is because PSO tries to optimize for model accuracy as well. However, the effectiveness of fairness improvement drops, i.e., model unfairness is reduced to 0.001 for $\alpha=0.1$ but at $\alpha=0.9$ the unfairness is as high as 0.0368. 
For $NN_4$, $NN_5$ and $NN_6$ costs for all $\alpha$ values are small ($< 1\%$).
%and a larger $\alpha$ further improves model accuracy.
This is because fixing the backdoor itself will improve model accuracy as well. 
For $NN_4$ and $NN_5$ the performance improvement (drop in SR) is quite stable over different $\alpha$. For $NN_6$, the performance improvement drops as $\alpha$ increases, hence, we select small $\alpha$ to let the optimization focus on backdoor removal.
As for $NN_7$, $NN_8$ and $NN_9$, the performance improvement is stable over different $\alpha$'s values. For $NN_7$, the cost drops from $0.25\%$ to $0.02\%$ when $\alpha$ increases from $0.1$ to $> 0.2$. While for $NN_8$ and $NN_9$, value of $\alpha$ does not affect the cost much. Hence, we select small $\alpha$ value for $NN_7$ and $NN_8$ and select $\alpha > 0.1$ for $NN_9$.
As shown by the experiment results, the value of $\alpha$ balances the trade-off between performance improvement and cost. A smaller $\alpha$ often results in more effective property repair but model accuracy may be affected.
%To this end, we carefully select the value of parameter $\alpha$ for each model, aiming to maximize the fairness improvement while keeping model accuracy difference to be within $3\%$. 
%In our experiments, we set $\alpha$ to 0.3 for models trained on Census Income and Bank Marketing datasets and set $\alpha$ to 0.1 for model trained on German Credit dataset. 
In our experiments, we set the value of $\alpha$ as shown in Table \ref{tab:alpha}.
%\todo{We should use the experiment results on the other models to justify the $\alpha$ value for the German Credit dataset}.
%\todo{One model is suspicious, we should have at least one model for each dataset for this experiment.} 

Thus to answer RQ2, we say that $\alpha$ does have a significant impact on the repairing result. In practice, the users can decide its value based on the importance of the property. In the future, we aim to study systematic ways of identifying the best $\alpha$ value automatically.\\
 
\begin{table}[t]
\centering
\caption{Fairness Repair Results}
\begin{tabular}[t]{p{8mm}|p{8mm}|p{9mm}p{9mm}|p{6mm}p{6mm}|p{4mm}p{4mm}}%{c|c|cc|cc|cc}
\toprule
\multirow{2}{*}{Model}&\multirow{2}{*}{P. Feat.}&\multicolumn{2}{c|}{Fairness Score}&\multicolumn{2}{c|}{Accuracy}&\multicolumn{2}{c}{Time(s)}\\
 & &Before&After&Before&After&Loc&Tot\\
\midrule
$NN_1$&Race&0.0431&0.0119&0.88&0.85&118&314\\
$NN_1$&Age&0.0331&0.0230&0.88&0.86&173&276\\
$NN_1$&Gender&0.0130&0.0070&0.88&0.86&120&240\\
%NN_2&Race&0.0020&0.0017&0.82&0.81&178&242\\
%NN_2&Age&0.0103&0.0011&0.82&0.81&289&403\\
%NN_2&Gender&0.0096&0.0061&0.82&0.81&215&309\\
%NN_3&Race&0.0117&0.0002&0.88&0.86&420&523\\
%NN_3&Age&0.0220&0.0149&0.88&0.87&496&565\\
%NN_3&Gender&0.0099&0.0013&0.88&0.88&397&546\\
$NN_2$&Age&0.0659&0.0005&1.0&1.0&170&221\\
$NN_2$&Gender&0.0524&0.0374&1.0&1.0&103&129\\
$NN_3$&Age&0.0544&0.0028&0.92&0.91&174&435\\
\bottomrule

\end{tabular}
\label{tab:rq1}
%   \begin{tablenotes}
%      \small
%      \item 
%    \end{tablenotes}
\end{table}

\begin{table}[t]
\centering
\caption{Backdoor Removal Results}
\begin{tabular}[t]{p{8mm}|p{9mm}p{7mm}|p{8mm}p{8mm}|p{4mm}p{4mm}}%{c|c|cc|cc|cc}
\toprule
\multirow{2}{*}{Model}&\multicolumn{2}{c|}{Attack SR}&\multicolumn{2}{c|}{Accuracy}&\multicolumn{2}{c}{Time(s)}\\
 &Before&After&Before&After&Loc&Tot\\
\midrule
$NN_4$&0.9808&0&0.9657&0.9654&80&493\\
$NN_5$&0.9868&0&0.9909&0.9867&330&541\\
$NN_6$&0.9977&0.0007&0.9136&0.9199&108&492\\
\bottomrule

\end{tabular}
\label{tab:rq1.backdoor}
%   \begin{tablenotes}
%      \small
%      \item 
%    \end{tablenotes}
\end{table}

\begin{table}[t]
\centering
\caption{Safety Property Repair Results}
\begin{tabular}[t]{p{8mm}|p{9mm}p{9mm}|p{6mm}p{6mm}|p{4mm}p{4mm}}%{c|c|cc|cc|cc}
\toprule
\multirow{2}{*}{Model}&\multicolumn{2}{c|}{Counterexample VR }&\multicolumn{2}{c|}{Drawndown VR}&\multicolumn{2}{c}{Time(s)}\\
 &Before&After&Before&After&Loc&Tot\\
\midrule
$NN_7$&1.0&0.0065&0.0&0.0&33&503\\
$NN_8$&1.0&0.0&0.0&0.0&31&168\\
$NN_9$&1.0&0.0001&0.0&0.0&39&325\\
\bottomrule

\end{tabular}
\label{tab:rq1.acas}
%   \begin{tablenotes}
%      \small
%      \item 
%    \end{tablenotes}
\end{table} 

\begin{table}[t]
\centering
\caption{Comparison with Existing Works}
\begin{tabular}[t]{c|cccccc}%{c|{12mm}c{6mm}c{6mm}c{6mm}c{6mm}c{6mm}c{6mm}}%{c|c|cc|cc|cc}
\toprule

\multirow{2}{*}{Tech.}&\multicolumn{2}{c}{Fairness}&\multicolumn{2}{c}{Backdoor}&\multicolumn{2}{c}{Safety}\\
 &Imp.&Cost&Imp.&Cost&Imp.&Cost\\
\midrule
CARE&61.9\%&1.7\%&99.9\%&1.5\%&99.6\%&0.15\%\\
DTMC&45.1\%&2.2\%&-&-&-&-\\
NNREPAIR&-&-&19.3\%&5.9\%&-&-\\
PRDNN&-&-&-&-&97.4\%&$0.01\%$\\
\bottomrule

\end{tabular}
\label{tab:rq5.comp}
%   \begin{tablenotes}
%      \small
%      \item 
%    \end{tablenotes}
\end{table}
 
\begin{table}[t]
\centering
\caption{Parameter $\alpha$ Setting}
\begin{tabular}[t]{p{1.0cm}p{0.6cm}p{0.6cm}p{0.8cm}p{0.6cm}p{0.8cm}p{0.8cm}}
\toprule
Model&$NN_1$&$NN_1$&$NN_1$&$NN_2$&$NN_2$&$NN_3$\\
P. Feat.&Race&Age&Gender&Age&Gender&Age\\
\midrule
$\alpha$&0.1&0.7&0.8&0.1&0.1&0.3\\
\toprule
Model&$NN_4$&$NN_5$&$NN_6$&$NN_7$&$NN_8$&$NN_9$\\
\midrule
$\alpha$&0.2&0.2&0.2&0.3&0.1&0.2\\
\bottomrule
\end{tabular}
\label{tab:alpha}
\end{table}
 
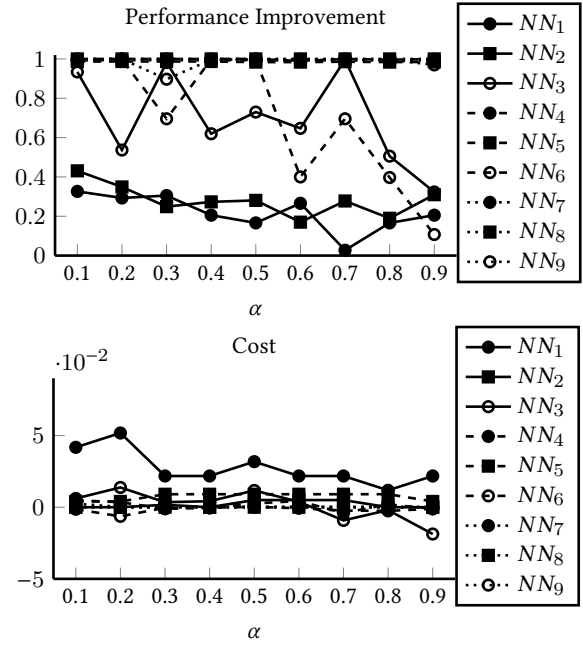
\begin{figure}
\begin{tikzpicture}
\pgfplotsset{%
    width=0.3\textwidth,
    height=0.15\textwidth
}

        \begin{axis}[%
                scale only axis,
                xmin = 0.05, xmax = 0.95,
                xtick={0.1,0.2,0.3,0.4,0.5,0.6,0.7,0.8,0.9},
                xticklabels={0.1,0.2,0.3,0.4,0.5,0.6,0.7,0.8,0.9},
                %xmajorgrids,
                ymin=0.0, ymax=1.02,
                xlabel={$\alpha$},
                %ymajorgrids,
                title={Performance Improvement},
                axis lines*=left,
                line width=1.0pt,
                mark size=2.0pt,
                legend style={at={(1.03,0.8)},
                %anchor= north east west,
                draw=black,fill=white,align=left},
                legend style={at={(7cm,1.5cm)},anchor=east}
                ]

                \addplot [
                color=black,
                solid,
                mark=*,
                mark options={solid},
                ]
                coordinates{
                    (0.1,0.326283987915408)
                    (0.2,0.293051359516616)
                    (0.3,0.305135951661631)
                    (0.4,0.205438066465257)
                    (0.5,0.166163141993958)
                    (0.6,0.265861027190332)
                    (0.7,0.0271903323262839)
                    (0.8,0.166163141993958)
                    (0.9,0.205438066465257)
                };
                \addlegendentry{$NN_1$};

                \addplot [
                color=black,
                solid,
                mark=square*,
                mark options={solid},
                ]
                coordinates{
                    (0.1,0.430955993930197)
                    (0.2,0.349013657056146)
                    (0.3,0.24886191198786)
                    (0.4,0.273141122913505)
                    (0.5,0.280728376327769)
                    (0.6,0.169954476479514)
                    (0.7,0.277693474962064)
                    (0.8,0.189681335356601)
                    (0.9,0.309559939301973)
                };
                \addlegendentry{$NN_2$};

                \addplot [
                color=black,
                %dash pattern=on 1pt off 3pt on 3pt off 3pt,
                solid,
                mark=o,
                mark options={solid},
                ]
                coordinates{
                    (0.1,0.933823529411765)
                    (0.2,0.536764705882353)
                    (0.3,0.981617647058823)
                    (0.4,0.619485294117647)
                    (0.5,0.729779411764706)
                    (0.6,0.647058823529412)
                    (0.7,0.996323529411765)
                    (0.8,0.505514705882353)
                    (0.9,0.323529411764706)
                };
                \addlegendentry{$NN_3$};

                \addplot [
                color=black,
                dashed,
                mark=*,
                mark options={solid},
                ]
                coordinates{
                    (0.1,1)
                    (0.2,1)
                    (0.3,0.999794440351523)
                    (0.4,0.999794440351523)
                    (0.5,1)
                    (0.6,0.989310898279219)
                    (0.7,1)
                    (0.8,0.996299926327422)
                    (0.9,0.998766642109141)
                };
                \addlegendentry{$NN_4$};

                \addplot [
                color=black,
                %dash pattern=on 3pt off 6pt on 6pt off 6pt,
                dashed,
                mark=square*,
                mark options={solid},
                ]
                coordinates{
                    (0.1,0.9868)
                    (0.2,0.9868)
                    (0.3,0.9868)
                    (0.4,0.9868)
                    (0.5,0.9838)
                    (0.6,0.9838)
                    (0.7,0.9858)
                    (0.8,0.98378)
                    (0.9,0.98478)
                };
                \addlegendentry{$NN_5$};
                
                              \addplot [
                color=black,
                dashed,
                mark=o,
                mark options={solid},
                ]
                coordinates{
                    (0.1,0.997027956989247)
                    (0.2,0.997027956989247)
                    (0.3,0.695616666666667)
                    (0.4,0.996691935483871)
                    (0.5,0.996691935483871)
                    (0.6,0.399917741935484)
                    (0.7,0.695611158117398)
                    (0.8,0.396893548387097)
                    (0.9,0.107243010752689)
                };
                \addlegendentry{$NN_6$};

                \addplot [
                color=black,
                dotted,
                mark=*,
                mark options={solid},
                ]
                coordinates{
                    (0.1,0.99291)
                    (0.2,0.9934)
                    (0.3,0.9935)
                    (0.4,0.9931)
                    (0.5,0.9929)
                    (0.6,0.9928)
                    (0.7,0.9928)
                    (0.8,0.9929)
                    (0.9,0.9927)
                };
                \addlegendentry{$NN_7$};
                
                \addplot [
                color=black,
                dotted,
                mark=square*,
                mark options={solid},
                ]
                coordinates{
                    (0.1,1)
                    (0.2,1)
                    (0.3,1)
                    (0.4,1)
                    (0.5,1)
                    (0.6,1)
                    (0.7,1)
                    (0.8,1)
                    (0.9,1)
                };
                \addlegendentry{$NN_8$};
                
                \addplot [
                color=black,
                dotted,
                mark=o,
                mark options={solid},
                ]
                coordinates{
                    (0.1,0.9956)
                    (0.2,0.9999)
                    (0.3,0.8967)
                    (0.4,0.9961)
                    (0.5,0.9978)
                    (0.6,0.9983)
                    (0.7,0.9941)
                    (0.8,0.9963)
                    (0.9,0.9695)
                };
                \addlegendentry{$NN_9$};

        \end{axis} 
\end{tikzpicture}

\begin{tikzpicture}
\pgfplotsset{%
    width=0.3\textwidth,
    height=0.15\textwidth
}

        \begin{axis}[%
                scale only axis,
                xmin = 0.05, xmax = 0.95,
                xtick={0.1,0.2,0.3,0.4,0.5,0.6,0.7,0.8,0.9},
                xticklabels={0.1,0.2,0.3,0.4,0.5,0.6,0.7,0.8,0.9},
                %xmajorgrids,
                ymin=-0.05, ymax=0.09,
                xlabel={$\alpha$},
                %ymajorgrids,
                title={Cost},
                axis lines*=left,
                line width=1.0pt,
                mark size=2.0pt,
                legend style={at={(1.03,0.8)},
                %anchor= north east west,
                draw=black,fill=white,align=left},
                legend style={at={(7cm,1.5cm)},anchor=east}]

                \addplot [
                color=black,
                solid,
                mark=*,
                mark options={solid},
                ]
                coordinates{
                    (0.1,0.0418000000000001)
                    (0.2,0.0518000000000001)
                    (0.3,0.0218)
                    (0.4,0.0218)
                    (0.5,0.0318000000000001)
                    (0.6,0.0218)
                    (0.7,0.0218)
                    (0.8,0.0118)
                    (0.9,0.0218)
                };
                \addlegendentry{$NN_1$};

                \addplot [
                color=black,
                solid,
                mark=square*,
                mark options={solid},
                ]
                coordinates{
                    (0.1,0)
                    (0.2,0)
                    (0.3,0.00170000000000003)
                    (0.4,0)
                    (0.5,0.005)
                    (0.6,0.005)
                    (0.7,0.005)
                    (0.8,0)
                    (0.9,0)
                };
                \addlegendentry{$NN_2$};

                \addplot [
                color=black,
                %dash pattern=on 1pt off 3pt on 3pt off 3pt,
                solid,
                mark=o,
                mark options={solid},
                ]
                coordinates{
                    (0.1,0.00609999999999999)
                    (0.2,0.0138)
                    (0.3,0.00350000000000006)
                    (0.4,0.00419999999999998)
                    (0.5,0.0117)
                    (0.6,0.00380000000000003)
                    (0.7,-0.00919999999999999)
                    (0.8,-0.002)
                    (0.9,-0.0186999999999999)
                };
                \addlegendentry{$NN_3$};

                \addplot [
                color=black,
                dashed,
                mark=*,
                mark options={solid},
                ]
                coordinates{
                    (0.1,0.00487096774193574)
                    (0.2,0.00350806451612984)
                    (0.3,-0.00112903225806427)
                    (0.4,-0.000322580645160886)
                    (0.5,0.000282258064516538)
                    (0.6,-0.000725806451612465)
                    (0.7,-0.00233870967741912)
                    (0.8,-0.00254032258064485)
                    (0.9,-0.00112903225806404)
                };
                \addlegendentry{$NN_4$};

                \addplot [
                color=black,
                %dash pattern=on 3pt off 6pt on 6pt off 6pt,
                dashed,
                mark=square*,
                mark options={solid},
                ]
                coordinates{
                    (0.1,0.00403229999999999)
                    (0.2,0.00403225806451701)
                    (0.3,0.00907258064516203)
                    (0.4,0.00907258064516203)
                    (0.5,0.00907000000000002)
                    (0.6,0.00907000000000002)
                    (0.7,0.00907000000000002)
                    (0.8,0.0091)
                    (0.9,0.00403299999999995)
                };
                \addlegendentry{$NN_5$};
                
                \addplot [
                color=black,
                %dash pattern=on 3pt off 6pt on 6pt off 6pt,
                dashed,
                mark=o,
                mark options={solid},
                ]
                coordinates{
                    (0.1,-0.00138655913978469)
                    (0.2,-0.00636628767847669)
                    (0.3,-0.00037849462365569)
                    (0.4,-0.000521716904635761)
                    (0.5,0.00264569892473154)
                    (0.6,0.00432580645161329)
                    (0.7,-0.0040086285915738)
                    (0.8,0.00197365591397891)
                    (0.9,-0.0010505376344081)
                };
                \addlegendentry{$NN_6$};

                \addplot [
                color=black,
                dotted,
                mark=*,
                mark options={solid},
                ]
                coordinates{
                    (0.1,0.00249999999999995)
                    (0.2,0.000199999999999978)
                    (0.3,0.0002)
                    (0.4,0.0002)
                    (0.5,0.0002)
                    (0.6,0.0002)
                    (0.7,0.0002)
                    (0.8,0.0002)
                    (0.9,0.0002)
                };
                \addlegendentry{$NN_7$};
                
                \addplot [
                color=black,
                dotted,
                mark=square*,
                mark options={solid},
                ]
                coordinates{
                    (0.1,0)
                    (0.2,0)
                    (0.3,0)
                    (0.4,0)
                    (0.5,0)
                    (0.6,0)
                    (0.7,0)
                    (0.8,0)
                    (0.9,0)
                };
                \addlegendentry{$NN_8$}; 
                
                \addplot [
                color=black,
                dotted,
                mark=o,
                mark options={solid},
                ]
                coordinates{
                    (0.1,0.0001)
                    (0.2,0)
                    (0.3,0)
                    (0.4,0)
                    (0.5,0.0002)
                    (0.6,0.0001)
                    (0.7,0.0001)
                    (0.8,0)
                    (0.9,0)
                };
                \addlegendentry{$NN_9$};

        \end{axis} 
\end{tikzpicture}
 
\caption{Effect of Parameter $\alpha$}
\label{fig:alpha}
\end{figure} 

\begin{table*}[t]
  \centering
  \begin{tabular}[t]{p{18mm}|p{10mm}p{10mm}p{10mm}p{10mm}|p{10mm}p{10mm}p{10mm}p{10mm}}
    \toprule
    \multirow{2}{*}{Model}&\multicolumn{4}{c|}{Performance Improvement}&\multicolumn{4}{c}{Cost}\\
     &CARE&Gradient&Random&All&CARE&Gradient&Random&All\\
    \midrule
$NN_1$ Race&$0.7222$&$0.9953$&$0.9953$&$1.0$&$0.0318$&$0.1218$&$0.1418$&$0.1318$\\
$NN_1$ Age&$0.3051$&$0.9486$&$0.9987$&$0.9848$&$0.0218$&$0.1818$&$0.1018$&$0.1418$\\
$NN_1$ Gender&$0.4615$&$0.6692$&$0.7384$&$0.9230$&$0.0218$&$0.1418$&$0.1118$&$0.1018$\\
$NN_2$ Age&$0.9924$&$0.1365$&$0.4324$&$0.0273$&$0.0$&$0.4000$&$0.1200$&$0.3500$\\
$NN_2$ Gender&$0.2847$&$0.7345$&$0.5989$&$0.1348$&$0.0$&$0.2000$&$0.1900$&$0.4200$\\
$NN_3$ Age&$0.9485$&$0.9301$&$0.6562$&$1.0$&$0.0125$&$0.0226$&$0.0426$&$0.0626$\\
$NN_4$&$1.0$&$0.0174$&$0.0050$&$0.3901$&$0.0002$&$0.0032$&$0.0050$&$0.0056$\\
$NN_5$&$0.9868$&$0.1058$&$0.0938$&$0.5220$&$0.0040$&$0.0019$&$0.0$&$0.0038$\\
$NN_6$&$0.9970$&$0.6995$&$0.0935$&$0.1729$&$-0.006$&$0.0016$&$0.0628$&$0.0034$\\
$NN_7$&$0.9935$&$0.0004$&$0.0207$&$0.5707$&$0.0$&$0.0$&$0.0$&$0.0001$\\
$NN_8$&$1.0$&$0.9250$&$0.6440$&$0.7485$&$0.0$&$0.0$&$0.0$&$0.0$\\
$NN_9$&$0.9999$&$0.1446$&$0.0663$&$0.9066$&$0.0$&$0.0$&$0.0$&$0.0002$\\
    \bottomrule
  \end{tabular}
  \caption{Fault Localization Effectiveness}
  \label{tab:rq3}
\end{table*}

\noindent \emph{RQ3: How effective is the causality-based fault localization?} 
%\todo{We need to compare the other repair tasks as well.}
This question asks whether our causality-based fault localization is useful. We answer this question by comparing the repair effectiveness through optimizing the weight parameters that are selected in four different ways, i.e., 1) selected based on our approach; 2) selected randomly; 3) selected based gradients and 4) include all the parameters. The results are shown in Table~\ref{tab:rq3}. 

Firstly,  we perform network repair with randomly selected neurons to optimize. We follow the same configuration of CARE as the one we used in RQ1, i.e., with the same $\alpha$ value and the number of neurons to fix. The performance improvement and cost of the repair is shown in \emph{Random1} columns in Table~\ref{tab:rq3}. Adjusting randomly selected parameters results in an average performance improvement of $47.6\%$. While CARE improves the performance by $80.8\%$ on all tasks. For fairness repair tasks, the improvement is significant with randomly selected neurons.
%In all nine models, the performance w.r.t. corresponding property improves after the repair. 
However, the model accuracy drops sharply. The overall accuracy drop is $8.9\%$ on average with the worst case of $19.0\%$ for $NN_2$ w.r.t. protected feature gender. For backdoor removal tasks, the performance improvement is $6.4\%$ on average and model accuracy is affected. Especially for $NN_6$, accuracy is reduced by $6.3\%$ after the repair. In terms of safety repair tasks, the repair is not so effective. For $NN_7$ the performance improvement is less than $1\%$ and in average the improvement is only $36.6\%$, although the model accuracy is not affected much. Therefore, with randomly selected parameters, the repair is not effective and the performance is improved at the cost of disrupting correct behavior of the original model. On the other side, as described in Section~\ref{sec:approach}, our fault localization is based on the causal influence of each neuron on the targeted misbehavior without altering other behaviors. 

%To further understand the performance of randomly selecting neurons, we adjust the value of $\alpha$ in each task for better performance. For fairness repair tasks,  we increase the parameter $\alpha$ to $0.8\sim0.9$ so as to ensure the model accuracy drop is minimized. As show in \emph{Random2} columns, the overall model accuracy drops by $0.04$ and the average fairness improvement is $33.7\%$. Fairness repair effectiveness is halved compared with CARE with causality-based fault localization. Specifically, for $NN_2$ w.r.t. protected feature gender and $NN_1$ w.r.t age, the fairness score is even worse than the original network. Note that for $NN_1$ w.r.t. race, we cannot get accuracy drop to be within $5\%$ even if we set $\alpha$ to 0.9. For backdoor removal tasks, we set $\alpha$ to $0.1$ for $NN_4$ and $NN_5$ to let the optimization focus on backdoor removal. We set $\alpha$ to $0.9$ for $NN_6$ to ensure the model accuracy is not affected in the repair. As shown in the table, the repair is still below $10\%$ for all the three models, although the model accuracy drop is not significant. For safety repair tasks, we set $\alpha$ to $0.1$ for all the models so as to ensure the performance improvement. However, the overall improvement is still around $40\%$ on average. Hence, for randomly selected parameters, the repair is not effective even with carefully selected $\alpha$ values.

Secondly, we optimize all weight parameters of the given neural network, i.e., no fault localization. In this setting, the search space in the PSO algorithm increases significantly (compared to the case of repairing $10\%$ of parameters in CARE). Therefore, we limit the time taken by PSO so that time allowed to spend in this step is the same as that taken by CARE with fault localization. The results are shown in columns \emph{All}. For fairness repair tasks the fairness is improved by $67.8\%$ on average. But the model accuracy drops significantly, i.e., the accuracy drops by $20.1\%$ on average and $42.0\%$ in the worst case. For backdoor removal, the performance improvement is only $36.2\%$ on average, while CARE reduces the SR by $99.5\%$. For safety repair tasks, the performance improvement is $55.6\%$ on average and CARE manages to reduce the VR in evaluation set by $99.8\%$. Although the cost is below $1\%$ for these two tasks, the repair is not that effective. In all of the cases without fault localization, PSO is not able to converge within the time limit and as a result, neither performance improvement nor cost is optimized when PSO terminates. 

In the literature, gradient is a commonly used technique that guides the search in program fault localization, program repair and fairness testing~\cite{gradient_penalty,arachne,adf2020}. Therefore, we conduct experiments to compare the performance of CARE with gradient-guided fault localization. That is, instead of calculating causal attribution of the hidden neurons, the fault localization step is based on $\frac{\partial y}{\partial v}$ where $y$ represents the favourable output value and $v$ represents hidden neuron value. We use the gradient-based method to identify the same amount of ``responsible" neurons and perform optimization with the same setting as CARE. The results of gradient-based localization method is illustrated in columns \emph{Gradient} of Table~\ref{tab:rq3}. Compared with CARE, for fairness repair tasks, the gradient-based method is able to find a repair that satisfies fairness criteria but fails to maintain model accuracy. Overall fairness improves by $73.6\%$ but average accuracy drops by $17.8\%$, which is not desirable. For backdoor removal, the gradient-based method is not able to find an effective repair where the overall performance improvement is below $30\%$. Especially for $NN_4$, the SR is still as high as $96.4\%$ after the repair. For safety repair, gradient-based method is able to find a repair for $NN_8$ and $NN_9$ with performance improvement above $90\%$, while CARE archives $100\%$ backdoor removal. For $NN_7$, gradient-based method is not useful at all where VR in evaluation set is as high as $99\%$ after the repair. Hence the performance of gradient-based method is not stable and often not effective.

Thus to answer RQ3, we say our causality-based fault localization is effective in identifying candidates for parameter optimization. It guides CARE to focus on fixing undesirable behavior of the model while keeping the correct behavior unaffected.\\

\section{Related Work} \label{sec:related}
This work is broadly related to works on neural network verification, repair and causal interpretation. \\

\noindent \emph{Neural network verification.} There have been an impressive line of methods proposed recently for neural network verification. These includes solving the problem using abstraction techniques~\cite{abstract20,deeppoly,ai2}, SMT sovler~\cite{marabou20,Reluplex,invariant20}, MILP and LP~\cite{Ehlers17,milp}, symbolic execution~\cite{symbolic18} and many others~\cite{bound20,ensemble20,exactness19,popqorn}. There have also been attempts to verify neural network fairness properties, including~\cite{fairsquare} and~\cite{verification} based on numerical integration ,~\cite{uai20} based on non-convex optimization and~\cite{FM21} based on probabilistic model checking. Unlike these works, our approach focus on neural network repair instead and we leverage some existing verification methods~\cite{FM21,NC,provable} in our work for property verification.\\

\noindent \emph{Machine learning model repair.} There have been multiple attempts on repairing machine learning models to remove undesirable behaviors. In~\cite{batchwise18}, Kauschke~\emph{et al.} suggest a method to learn a model to determine error regions of the given model, which allows users to patch the given model. In~\cite{patch19}, Sotoudeh \emph{et al.} propose a method to correct a neural network by applying small changes to model weights based on SMT formulation of the problem. 
In~\cite{minimal20}, Goldberger \emph{et al.} leverage recent advances in neural network verification and presented an approach to repair a network with minimal change in its weights. This approach aims to find minimal layer-wise fixes for a given point-wise specification and the performance is restricted by the underlying verification method used. NNRepair~\cite{nnrepair} performs constraint-based repair on neural networks to fix undesirable behaviors but only applies to a single layer. Sotoudeh~\emph{et al.} proposed a method in~\cite{provable} to solve a neural network repair problem as a linear programming problem. Unlike our approach, both NNrepair and the method proposed in~\cite{provable} perform layer-wise repair but the actual fault in a buggy network could be across multiple layers. Furthermore, based on our experiment results, our approach is more effective in neural network repair against different properties.\\

%Different from the above-mentioned studies, our approach focuses on repairing neural networks so that fairness is satisfied. While~\cite{fairrepair20,undertainty17} attempt to convert an unfair model into a fair one leveraging SMT solving techniques or based on input population guided repair, both of them are designed for simple decision making programs such as decision trees and random forests and hence do not apply to neural networks. %In contrast, our approach supports deep FFNN and RNN by design.

\noindent 
\emph{Machine learning causal interpretation.}
Causality analysis has been applied to generate explanations for machine learning algorithms. Existing works focus on causal interpretable models that can explain why machine learning models make certain decision. Narendra~\emph{et al.}~\cite{causal_dnn} model DNNs as SCMs and estimate the causal effect of each component of the model on the output. In ~\cite{cal_att}, Chattophadhyay~\emph{et al.} proposed an scalable causal approach to estimate individual causal effect of each feature on the model output. Causal inference has been applied in machine learning model fairness studies. Kusner~\emph{et al.}~\cite{cf_fairness} proposed an approach to measure the fairness of a machine learning model based on counterfactuals where a fair model should have the same prediction for both actual sample and counterfactual sample. In~\cite{causal_learning}, Zhang~\emph{et al.} propose a causal explanation based metric to quantitatively measure the fairness of an algorithm. Our work utilizes SCMs and $do(\cdot)$ calculus to measure the causal attribution of hidden neurons on model undesirable behaviors. The results are used as a guideline for fault localization. 

\section{Conclusion} \label{sec:conclusion}
We present CARE, a causality-based technique for repairing neural networks for various properties. CARE performs fault localization on a given neural network model and utilizes PSO to adjust the weight parameters of the identified neurons. CARE generates repaired networks with improved performance over specified property while maintaining the model's accuracy. CARE is evaluated empirically with multiple neural networks trained on benchmark datasets and experiment results show that CARE is able to repair buggy model efficiently and effectively, with minimally disruption on existing correct behavior. 

\section{Data Availability}
A Prove of Concept (PoC) realization of this work (CARE) is implemented on top of SOCRATES~\cite{Pham2020SOCRATESTA} as a causality-based neural network repair engine. The source code of CARE is publicly available at~\cite{Pham2020SOCRATESTA}.

\section{Acknowledgements} \label{sec:ack}
We thank anonymous reviewers for their constructive feedback. This research is supported by Huawei International (Grant Number TC20210714014).

\bibliography{ref}

\begin{thebibliography}{10}

\bibitem{cate}
Jason Abrevaya, Yu-Chin Hsu, and Robert~P Lieli.
\newblock Estimating conditional average treatment effects.
\newblock {\em Journal of Business \& Economic Statistics}, 33(4):485--505,
  2015.

\bibitem{SG}
Aniya Agarwal, Pranay Lohia, Seema Nagar, Kuntal Dey, and Diptikalyan Saha.
\newblock Automated test generation to detect individual discrimination in {AI}
  models.
\newblock {\em CoRR}, 2018.

\bibitem{digits}
Aws Albarghouthi, Loris D'Antoni, and Samuel Drews.
\newblock Repairing decision-making programs under uncertainty.
\newblock In Rupak Majumdar and Viktor Kuncak, editors, {\em Computer Aided
  Verification - 29th International Conference, {CAV} 2017, Heidelberg,
  Germany, July 24-28, 2017, Proceedings, Part {I}}, volume 10426 of {\em
  Lecture Notes in Computer Science}, pages 181--200. Springer, 2017.

\bibitem{fairsquare}
Aws Albarghouthi, Loris D'Antoni, Samuel Drews, and Aditya~V. Nori.
\newblock Fairsquare: probabilistic verification of program fairness.
\newblock {\em Proc. {ACM} Program. Lang.}, 1({OOPSLA}):80:1--80:30, 2017.

\bibitem{verification}
Osbert Bastani, Xin Zhang, and Armando Solar{-}Lezama.
\newblock Probabilistic verification of fairness properties via concentration.
\newblock {\em {PACMPL}}, 3({OOPSLA}):118:1--118:27, 2019.

\bibitem{selfdriving}
Mariusz Bojarski, Davide~Del Testa, Daniel Dworakowski, Bernhard Firner, Beat
  Flepp, Prasoon Goyal, Lawrence~D. Jackel, Mathew Monfort, Urs Muller, Jiakai
  Zhang, Xin Zhang, Jake Zhao, and Karol Zieba.
\newblock End to end learning for self-driving cars.
\newblock {\em CoRR}, 2016.

\bibitem{bound20}
Rudy Bunel, Jingyue Lu, Ilker Turkaslan, Philip H.~S. Torr, Pushmeet Kohli, and
  M.~Pawan Kumar.
\newblock Branch and bound for piecewise linear neural network verification.
\newblock {\em J. Mach. Learn. Res.}, 21:42:1--42:39, 2020.

\bibitem{cal_att}
Aditya Chattopadhyay, Piyushi Manupriya, Anirban Sarkar, and Vineeth~N.
  Balasubramanian.
\newblock Neural network attributions: {A} causal perspective.
\newblock In Kamalika Chaudhuri and Ruslan Salakhutdinov, editors, {\em
  Proceedings of the 36th International Conference on Machine Learning, {ICML}
  2019, 9-15 June 2019, Long Beach, California, {USA}}, volume~97 of {\em
  Proceedings of Machine Learning Research}, pages 981--990. {PMLR}, 2019.

\bibitem{wy_causal}
Hana Chockler, Joseph~Y. Halpern, and Orna Kupferman.
\newblock What causes a system to satisfy a specification?
\newblock {\em {ACM} Trans. Comput. Log.}, 9(3):20:1--20:26, 2008.

\bibitem{gl_rep}
Guoliang Dong, Jun Sun, Jingyi Wang, Xinyu Wang, and Ting Dai.
\newblock Towards repairing neural networks correctly.
\newblock {\em CoRR}, abs/2012.01872, 2020.

\bibitem{bank}
Dheeru Dua and Casey Graff.
\newblock Bank marketing dataset at {UCI} machine learning repository, 2017.

\bibitem{census}
Dheeru Dua and Casey Graff.
\newblock Census income dataset at {UCI} machine learning repository, 2017.

\bibitem{credit}
Dheeru Dua and Casey Graff.
\newblock German credit dataset at {UCI} machine learning repository, 2017.

\bibitem{exactness19}
Krishnamurthy~(Dj) Dvijotham, Robert Stanforth, Sven Gowal, Chongli Qin, Soham
  De, and Pushmeet Kohli.
\newblock Efficient neural network verification with exactness
  characterization.
\newblock In Amir Globerson and Ricardo Silva, editors, {\em Proceedings of the
  Thirty-Fifth Conference on Uncertainty in Artificial Intelligence, {UAI}
  2019, Tel Aviv, Israel, July 22-25, 2019}, volume 115 of {\em Proceedings of
  Machine Learning Research}, pages 497--507. {AUAI} Press, 2019.

\bibitem{Ehlers17}
R{\"{u}}diger Ehlers.
\newblock Formal verification of piece-wise linear feed-forward neural
  networks.
\newblock {\em CoRR}, abs/1705.01320, 2017.

\bibitem{abstract20}
Yizhak~Yisrael Elboher, Justin Gottschlich, and Guy Katz.
\newblock An abstraction-based framework for neural network verification.
\newblock In Shuvendu~K. Lahiri and Chao Wang, editors, {\em Computer Aided
  Verification - 32nd International Conference, {CAV} 2020, Los Angeles, CA,
  USA, July 21-24, 2020, Proceedings, Part {I}}, volume 12224 of {\em Lecture
  Notes in Computer Science}, pages 43--65. Springer, 2020.

\bibitem{demographic}
Michael Feldman, Sorelle~A. Friedler, John Moeller, Carlos Scheidegger, and
  Suresh Venkatasubramanian.
\newblock Certifying and removing disparate impact.
\newblock In {\em Proceedings of the 21th {ACM} {SIGKDD} International
  Conference on Knowledge Discovery and Data Mining, Sydney, NSW, Australia},
  pages 259--268, 2015.

\bibitem{mcmc}
Ricky Fok, Aijun An, and Xiaogang Wang.
\newblock Optimization assisted {MCMC}.
\newblock {\em CoRR}, abs/1709.02888, 2017.

\bibitem{fraud_detection}
Kang Fu, Dawei Cheng, Yi~Tu, and Liqing Zhang.
\newblock Credit card fraud detection using convolutional neural networks.
\newblock In {\em Neural Information Processing - 23rd International Conference
  ({ICONIP} 2016), Kyoto, Japan}, pages 483--490, 2016.

\bibitem{gradient_penalty}
Xin Gao, Fang Deng, and Xianghu Yue.
\newblock Data augmentation in fault diagnosis based on the wasserstein
  generative adversarial network with gradient penalty.
\newblock {\em Neurocomputing}, 396:487--494, 2020.

\bibitem{counterfactual}
Sahaj Garg, Vincent Perot, Nicole Limtiaco, Ankur Taly, Ed~H. Chi, and Alex
  Beutel.
\newblock Counterfactual fairness in text classification through robustness.
\newblock In {\em Proceedings of the 2019 {AAAI/ACM} Conference on AI, Ethics,
  and Society ({AIES} 2019), Honolulu, HI, USA}, pages 219--226, 2019.

\bibitem{ai2}
Timon Gehr, Matthew Mirman, Dana Drachsler{-}Cohen, Petar Tsankov, Swarat
  Chaudhuri, and Martin~T. Vechev.
\newblock {AI2:} safety and robustness certification of neural networks with
  abstract interpretation.
\newblock In {\em 2018 {IEEE} Symposium on Security and Privacy, {SP} 2018,
  Proceedings, 21-23 May 2018, San Francisco, California, {USA}}, pages 3--18.
  {IEEE} Computer Society, 2018.

\bibitem{minimal20}
Ben Goldberger, Guy Katz, Yossi Adi, and Joseph Keshet.
\newblock Minimal modifications of deep neural networks using verification.
\newblock In Elvira Albert and Laura Kov{\'{a}}cs, editors, {\em {LPAR} 2020:
  23rd International Conference on Logic for Programming, Artificial
  Intelligence and Reasoning, Alicante, Spain, May 22-27, 2020}, volume~73 of
  {\em EPiC Series in Computing}, pages 260--278. EasyChair, 2020.

\bibitem{activation}
Divya Gopinath, Hayes Converse, Corina~S. Pasareanu, and Ankur Taly.
\newblock Property inference for deep neural networks.
\newblock In {\em 34th {IEEE/ACM} International Conference on Automated
  Software Engineering, {ASE} 2019, San Diego, CA, USA, November 11-15, 2019},
  pages 797--809. {IEEE}, 2019.

\bibitem{ensemble20}
Dennis Gross, Nils Jansen, Guillermo~A. P{\'{e}}rez, and Stephan Raaijmakers.
\newblock Robustness verification for classifier ensembles.
\newblock In Dang~Van Hung and Oleg Sokolsky, editors, {\em Automated
  Technology for Verification and Analysis - 18th International Symposium,
  {ATVA} 2020, Hanoi, Vietnam, October 19-23, 2020, Proceedings}, volume 12302
  of {\em Lecture Notes in Computer Science}, pages 271--287. Springer, 2020.

\bibitem{badnet}
Tianyu Gu, Kang Liu, Brendan Dolan{-}Gavitt, and Siddharth Garg.
\newblock Badnets: Evaluating backdooring attacks on deep neural networks.
\newblock {\em {IEEE} Access}, 7:47230--47244, 2019.

\bibitem{cal2}
Michael Harradon, Jeff Druce, and Brian~E. Ruttenberg.
\newblock Causal learning and explanation of deep neural networks via
  autoencoded activations.
\newblock {\em CoRR}, abs/1802.00541, 2018.

\bibitem{causal_canvas}
Amjad Ibrahim, Tobias Klesel, Ehsan Zibaei, Severin Kacianka, and Alexander
  Pretschner.
\newblock Actual causality canvas: {A} general framework for explanation-based
  socio-technical constructs.
\newblock In Giuseppe~De Giacomo, Alejandro Catal{\'{a}}, Bistra Dilkina,
  Michela Milano, Sen{\'{e}}n Barro, Alberto Bugar{\'{\i}}n, and
  J{\'{e}}r{\^{o}}me Lang, editors, {\em {ECAI} 2020 - 24th European Conference
  on Artificial Intelligence, 29 August-8 September 2020, Santiago de
  Compostela, Spain, August 29 - September 8, 2020 - Including 10th Conference
  on Prestigious Applications of Artificial Intelligence {(PAIS} 2020)}, volume
  325 of {\em Frontiers in Artificial Intelligence and Applications}, pages
  2978--2985. {IOS} Press, 2020.

\bibitem{ate}
Kosuke Imai, Gary King, and Elizabeth Stuart.
\newblock Misunderstandings among experimentalists and observationalists about
  causal inference.
\newblock {\em Journal of the Royal Statistical Society, Series A}, 171, part
  2:481{\textendash}502, 2008.

\bibitem{invariant20}
Yuval Jacoby, Clark~W. Barrett, and Guy Katz.
\newblock Verifying recurrent neural networks using invariant inference.
\newblock In Dang~Van Hung and Oleg Sokolsky, editors, {\em Automated
  Technology for Verification and Analysis - 18th International Symposium,
  {ATVA} 2020, Hanoi, Vietnam, October 19-23, 2020, Proceedings}, volume 12302
  of {\em Lecture Notes in Computer Science}, pages 57--74. Springer, 2020.

\bibitem{uai20}
Philips~George John, Deepak Vijaykeerthy, and Diptikalyan Saha.
\newblock Verifying individual fairness in machine learning models.
\newblock In Ryan~P. Adams and Vibhav Gogate, editors, {\em Proceedings of the
  Thirty-Sixth Conference on Uncertainty in Artificial Intelligence, {UAI}
  2020, virtual online, August 3-6, 2020}, volume 124 of {\em Proceedings of
  Machine Learning Research}, pages 749--758. {AUAI} Press, 2020.

\bibitem{causal_test}
Brittany Johnson, Yuriy Brun, and Alexandra Meliou.
\newblock Causal testing: understanding defects' root causes.
\newblock In Gregg Rothermel and Doo{-}Hwan Bae, editors, {\em {ICSE} '20: 42nd
  International Conference on Software Engineering, Seoul, South Korea, 27 June
  - 19 July, 2020}, pages 87--99. {ACM}, 2020.

\bibitem{acas}
Kyle~D. Julian, Mykel~J. Kochenderfer, and Michael~P. Owen.
\newblock Deep neural network compression for aircraft collision avoidance
  systems.
\newblock {\em CoRR}, abs/1810.04240, 2018.

\bibitem{Reluplex}
Guy Katz, Clark~W. Barrett, David~L. Dill, Kyle Julian, and Mykel~J.
  Kochenderfer.
\newblock Reluplex: An efficient {SMT} solver for verifying deep neural
  networks.
\newblock {\em CoRR}, abs/1702.01135, 2017.

\bibitem{marabou20}
Guy Katz, Derek~A. Huang, Duligur Ibeling, Kyle Julian, Christopher Lazarus,
  Rachel Lim, Parth Shah, Shantanu Thakoor, Haoze Wu, Aleksandar Zeljic,
  David~L. Dill, Mykel~J. Kochenderfer, and Clark~W. Barrett.
\newblock The marabou framework for verification and analysis of deep neural
  networks.
\newblock In Isil Dillig and Serdar Tasiran, editors, {\em Computer Aided
  Verification - 31st International Conference, {CAV} 2019, New York City, NY,
  USA, July 15-18, 2019, Proceedings, Part {I}}, volume 11561 of {\em Lecture
  Notes in Computer Science}, pages 443--452. Springer, 2019.

\bibitem{batchwise18}
Sebastian Kauschke and Johannes F{\"{u}}rnkranz.
\newblock Batchwise patching of classifiers.
\newblock In Sheila~A. McIlraith and Kilian~Q. Weinberger, editors, {\em
  Proceedings of the Thirty-Second {AAAI} Conference on Artificial
  Intelligence, (AAAI-18), the 30th innovative Applications of Artificial
  Intelligence (IAAI-18), and the 8th {AAAI} Symposium on Educational Advances
  in Artificial Intelligence (EAAI-18), New Orleans, Louisiana, USA, February
  2-7, 2018}, pages 3374--3381. {AAAI} Press, 2018.

\bibitem{popqorn}
Ching{-}Yun Ko, Zhaoyang Lyu, Lily Weng, Luca Daniel, Ngai Wong, and Dahua Lin.
\newblock {POPQORN:} quantifying robustness of recurrent neural networks.
\newblock In Kamalika Chaudhuri and Ruslan Salakhutdinov, editors, {\em
  Proceedings of the 36th International Conference on Machine Learning, {ICML}
  2019, 9-15 June 2019, Long Beach, California, {USA}}, volume~97 of {\em
  Proceedings of Machine Learning Research}, pages 3468--3477. {PMLR}, 2019.

\bibitem{NNSCM}
Murat Kocaoglu, Christopher Snyder, Alexandros~G. Dimakis, and Sriram
  Vishwanath.
\newblock Causalgan: Learning causal implicit generative models with
  adversarial training.
\newblock In {\em 6th International Conference on Learning Representations,
  {ICLR} 2018, Vancouver, BC, Canada, April 30 - May 3, 2018, Conference Track
  Proceedings}. OpenReview.net, 2018.

\bibitem{cifar10}
Alex Krizhevsky, Vinod Nair, and Geoffrey Hinton.
\newblock Cifar-10 (canadian institute for advanced research).

\bibitem{cf_fairness}
Matt~J. Kusner, Joshua~R. Loftus, Chris Russell, and Ricardo Silva.
\newblock Counterfactual fairness.
\newblock In Isabelle Guyon, Ulrike von Luxburg, Samy Bengio, Hanna~M. Wallach,
  Rob Fergus, S.~V.~N. Vishwanathan, and Roman Garnett, editors, {\em Advances
  in Neural Information Processing Systems 30: Annual Conference on Neural
  Information Processing Systems 2017, December 4-9, 2017, Long Beach, CA,
  {USA}}, pages 4066--4076, 2017.

\bibitem{mnist}
Yann LeCun and Corinna Cortes.
\newblock {MNIST} handwritten digit database.
\newblock 2010.

\bibitem{trojanNN}
Yingqi Liu, Shiqing Ma, Yousra Aafer, Wen{-}Chuan Lee, Juan Zhai, Weihang Wang,
  and Xiangyu Zhang.
\newblock Trojaning attack on neural networks.
\newblock In {\em 25th Annual Network and Distributed System Security
  Symposium, {NDSS} 2018, San Diego, California, USA, February 18-21, 2018}.
  The Internet Society, 2018.

\bibitem{deep_mutation}
Lei Ma, Fuyuan Zhang, Jiyuan Sun, Minhui Xue, Bo~Li, Felix Juefei-Xu, Chao Xie,
  Li~Li, Yang Liu, Jianjun Zhao, et~al.
\newblock Deepmutation: Mutation testing of deep learning systems.
\newblock In {\em 29th International Symposium on Software Reliability
  Engineering}, pages 100--111, 2018.

\bibitem{mode}
Shiqing Ma, Yingqi Liu, Wen{-}Chuan Lee, Xiangyu Zhang, and Ananth Grama.
\newblock {MODE:} automated neural network model debugging via state
  differential analysis and input selection.
\newblock In Gary~T. Leavens, Alessandro Garcia, and Corina~S. Pasareanu,
  editors, {\em Proceedings of the 2018 {ACM} Joint Meeting on European
  Software Engineering Conference and Symposium on the Foundations of Software
  Engineering, {ESEC/SIGSOFT} {FSE} 2018, Lake Buena Vista, FL, USA, November
  04-09, 2018}, pages 175--186. {ACM}, 2018.

\bibitem{cal_visual}
{\'{A}}lvaro~Parafita Mart{\'{\i}}nez and Jordi~Vitri{\`{a}} Marca.
\newblock Explaining visual models by causal attribution.
\newblock In {\em 2019 {IEEE/CVF} International Conference on Computer Vision
  Workshops, {ICCV} Workshops 2019, Seoul, Korea (South), October 27-28, 2019},
  pages 4167--4175. {IEEE}, 2019.

\bibitem{causal_dnn}
Tanmayee Narendra, Anush Sankaran, Deepak Vijaykeerthy, and Senthil Mani.
\newblock Explaining deep learning models using causal inference.
\newblock {\em CoRR}, abs/1811.04376, 2018.

\bibitem{trust_ai}
High-Level Expert~Group on~Artificial Intelligence (AI~HLEG).
\newblock Draft ethics guidelines for trustworthy ai.
\newblock Technical report, European Commission, 2018.

\bibitem{pearl_causality}
Judea Pearl.
\newblock Cambridge university press, 2009.

\bibitem{7tools}
Judea Pearl.
\newblock The seven tools of causal inference, with reflections on machine
  learning.
\newblock {\em Commun. {ACM}}, 62(3):54--60, 2019.

\bibitem{Pham2020SOCRATESTA}
Long~H. Pham, Jiaying Li, and Jun Sun.
\newblock Socrates: Towards a unified platform for neural network verification.
\newblock {\em ArXiv}, abs/2007.11206, 2020.

\bibitem{psop}
R.~Poli, J.~Kennedy, and T.~Blackwell.
\newblock Particle swarm optimization.
\newblock {\em Swarm Intelligence}, 1:33--57, 2007.

\bibitem{pso}
Riccardo Poli, James Kennedy, and Tim Blackwell.
\newblock Particle swarm optimization.
\newblock {\em Swarm Intell.}, 1(1):33--57, 2007.

\bibitem{face_recognition}
Florian Schroff, Dmitry Kalenichenko, and James Philbin.
\newblock Facenet: {A} unified embedding for face recognition and clustering.
\newblock In {\em {IEEE} Conference on Computer Vision and Pattern Recognition
  ({CVPR} 2015), Boston, MA, USA}, pages 815--823, 2015.

\bibitem{ga_pso}
Shahid Shabir and Ruchi Singla.
\newblock A comparative study of genetic algorithm and the particle swarm
  optimization.
\newblock {\em Int. J. Electr. Eng}, 9(2):215--223, 2016.

\bibitem{pso1998}
Y.~Shi and R.~Eberhart.
\newblock Parameter selection in particle swarm optimization.
\newblock In {\em Evolutionary Programming}, 1998.

\bibitem{deeppoly}
Gagandeep Singh, Timon Gehr, Markus P{\"{u}}schel, and Martin~T. Vechev.
\newblock An abstract domain for certifying neural networks.
\newblock {\em Proc. {ACM} Program. Lang.}, 3({POPL}):41:1--41:30, 2019.

\bibitem{arachne}
Jeongju Sohn, Sungmin Kang, and Shin Yoo.
\newblock Search based repair of deep neural networks.
\newblock {\em CoRR}, abs/1912.12463, 2019.

\bibitem{patch19}
Matthew Sotoudeh and A.~Thakur.
\newblock Correcting deep neural networks with small, generalizing patches.
\newblock In {\em Workshop on Safety and Robustness in Decision Making}, 2019.

\bibitem{provable}
Matthew Sotoudeh and Aditya~V. Thakur.
\newblock Provable repair of deep neural networks.
\newblock In Stephen~N. Freund and Eran Yahav, editors, {\em {PLDI} '21: 42nd
  {ACM} {SIGPLAN} International Conference on Programming Language Design and
  Implementation, Virtual Event, Canada, June 20-25, 20211}, pages 588--603.
  {ACM}, 2021.

\bibitem{GTSRB}
Johannes Stallkamp, Marc Schlipsing, Jan Salmen, and Christian Igel.
\newblock Man vs. computer: Benchmarking machine learning algorithms for
  traffic sign recognition.
\newblock {\em Neural Networks}, 32:323--332, 2012.

\bibitem{FM21}
Bing Sun, Jun Sun, Ting Dai, and Lijun Zhang.
\newblock Probabilistic verification of neural networks against group fairness.
\newblock In Marieke Huisman, Corina~S. Pasareanu, and Naijun Zhan, editors,
  {\em Formal Methods - 24th International Symposium, {FM} 2021, Virtual Event,
  November 20-26, 2021, Proceedings}, volume 13047 of {\em Lecture Notes in
  Computer Science}, pages 83--102. Springer, 2021.

\bibitem{milp}
Vincent Tjeng, Kai~Y. Xiao, and Russ Tedrake.
\newblock Evaluating robustness of neural networks with mixed integer
  programming.
\newblock In {\em 7th International Conference on Learning Representations,
  {ICLR} 2019, New Orleans, LA, USA, May 6-9, 2019}. OpenReview.net, 2019.

\bibitem{fairtest}
Florian Tram{\`{e}}r, Vaggelis Atlidakis, Roxana Geambasu, Daniel~J. Hsu,
  Jean{-}Pierre Hubaux, Mathias Humbert, Ari Juels, and Huang Lin.
\newblock Fairtest: Discovering unwarranted associations in data-driven
  applications.
\newblock In {\em 2017 {IEEE} European Symposium on Security and Privacy
  (EuroS{\&}P 2017), Paris, France}, pages 401--416, 2017.

\bibitem{aequitas}
Sakshi Udeshi, Pryanshu Arora, and Sudipta Chattopadhyay.
\newblock Automated directed fairness testing.
\newblock In {\em Proceedings of the 33rd {ACM/IEEE} International Conference
  on Automated Software Engineering ({ASE} 2018), Montpellier, France}, pages
  98--108, 2018.

\bibitem{nnrepair}
Muhammad Usman, Divya Gopinath, Youcheng Sun, Yannic Noller, and Corina~S.
  Pasareanu.
\newblock Nnrepair: Constraint-based repair of neural network classifiers.
\newblock In Alexandra Silva and K.~Rustan~M. Leino, editors, {\em Computer
  Aided Verification - 33rd International Conference, {CAV} 2021, Virtual
  Event, July 20-23, 2021, Proceedings, Part {I}}, volume 12759 of {\em Lecture
  Notes in Computer Science}, pages 3--25. Springer, 2021.

\bibitem{medical_diagnosis}
Sandra Vieira, Walter~H.L. Pinaya, and Andrea Mechelli.
\newblock Using deep learning to investigate the neuroimaging correlates of
  psychiatric and neurological disorders: Methods and applications.
\newblock {\em Neuroscience \& Biobehavioral Reviews}, 74:58--75, 2017.

\bibitem{NC}
Bolun Wang, Yuanshun Yao, Shawn Shan, Huiying Li, Bimal Viswanath, Haitao
  Zheng, and Ben~Y. Zhao.
\newblock Neural cleanse: Identifying and mitigating backdoor attacks in neural
  networks.
\newblock In {\em 2019 {IEEE} Symposium on Security and Privacy, {SP} 2019, San
  Francisco, CA, USA, May 19-23, 2019}, pages 707--723. {IEEE}, 2019.

\bibitem{symbolic18}
Shiqi Wang, Kexin Pei, Justin Whitehouse, Junfeng Yang, and Suman Jana.
\newblock Formal security analysis of neural networks using symbolic intervals.
\newblock In William Enck and Adrienne~Porter Felt, editors, {\em 27th {USENIX}
  Security Symposium, {USENIX} Security 2018, Baltimore, MD, USA, August 15-17,
  2018}, pages 1599--1614. {USENIX} Association, 2018.

\bibitem{fashion}
Han Xiao, Kashif Rasul, and Roland Vollgraf.
\newblock Fashion-mnist: a novel image dataset for benchmarking machine
  learning algorithms.
\newblock {\em CoRR}, abs/1708.07747, 2017.

\bibitem{cnn}
Bin Yang, Honglei Guo, and Enguo Cao.
\newblock Chapter two - design of cyber-physical-social systems with
  forensic-awareness based on deep learning.
\newblock In Ali~R. Hurson and Sheng Wu, editors, {\em AI and Cloud Computing},
  volume 120 of {\em Advances in Computers}, pages 39--79. Elsevier, 2021.

\bibitem{causal_learning}
Junzhe Zhang and Elias Bareinboim.
\newblock Fairness in decision-making - the causal explanation formula.
\newblock In Sheila~A. McIlraith and Kilian~Q. Weinberger, editors, {\em
  Proceedings of the Thirty-Second {AAAI} Conference on Artificial
  Intelligence, (AAAI-18), the 30th innovative Applications of Artificial
  Intelligence (IAAI-18), and the 8th {AAAI} Symposium on Educational Advances
  in Artificial Intelligence (EAAI-18), New Orleans, Louisiana, USA, February
  2-7, 2018}, pages 2037--2045. {AAAI} Press, 2018.

\bibitem{adf2020}
P.~Zhang, J.~Wang, Jun Sun, Guoliang Dong, Xinyu Wang, Xingen Wang, J.~Dong,
  and Dai Ting.
\newblock White-box fairness testing through adversarial sampling.
\newblock pages 1--12, 2020.

\bibitem{cal_bb}
Qingyuan Zhao and Trevor Hastie.
\newblock Causal interpretations of black-box models.
\newblock {\em Journal of Business \& Economic Statistics}, 39(1):272--281,
  2021.

\end{thebibliography}
\bibliographystyle{plain}

\end{document}